\documentclass[conference]{IEEEtran}%
\pdfoutput=1
\usepackage{amsfonts}
\usepackage{amsmath}
\usepackage{amssymb}
\usepackage{graphicx}
\setcounter{MaxMatrixCols}{30}
\IEEEoverridecommandlockouts
\providecommand{\U}[1]{\protect\rule{.1in}{.1in}}
\newtheorem{theorem}{Theorem}

\newtheorem{corollary}[theorem]{Corollary}

\newtheorem{lemma}[theorem]{Lemma}

\newtheorem{remark}[theorem]{Remark}

\newenvironment{proof}[1][Proof]{\noindent\textbf{#1.} }{\ \rule{0.5em}{0.5em}}
\begin{document}

\title{Explicit receivers for pure-interference bosonic multiple access channels}
\author{\IEEEauthorblockN{Mark M. Wilde} \IEEEauthorblockA{\it School of Computer Science\\McGill University\\Montreal, Qu\'{e}bec,
Canada H3A 2A7} \and
\IEEEauthorblockN{Saikat Guha\thanks{\tiny{MMW acknowledges financial support from Centre de Recherches Math\'{e}matiques. SG was supported by the DARPA Information in a Photon (InPho) program under DARPA/CMO Contract No. HR0011-10-C-0159. The views and conclusions contained in this document are those of the authors and should not be interpreted as representing the official policies, either expressly or implied, of the Defense Advanced Research Projects Agency or the U.S. Government.}}}
\IEEEauthorblockA{\it Disruptive Information Proc. Tech. Group\\Raytheon BBN
Technologies\\Cambridge, Massachusetts, USA 02138}}

\maketitle

\begin{abstract}
The pure-interference bosonic multiple access channel has two senders and one receiver,
such that the senders each communicate with multiple temporal modes of a single spatial
mode of light. The channel mixes the input modes from the two users pairwise on a lossless
beamsplitter, and the receiver has access to one of the two output ports.
In prior work, Yen and Shapiro found the capacity region of this channel if
encodings consist of coherent-state preparations.
Here, we demonstrate how to achieve the coherent-state Yen-Shapiro region
(for a range of parameters) using
a sequential decoding strategy, and we show that our strategy outperforms
the rate regions achievable
using conventional receivers. Our
receiver performs binary-outcome quantum measurements for every codeword pair
in the senders' codebooks. A crucial component of this scheme is a
non-destructive \textquotedblleft vacuum-or-not\textquotedblright measurement
that projects an $n$-symbol modulated codeword onto the $n$-fold vacuum
state or its orthogonal complement, such that the post-measurement state
is either the $n$-fold vacuum or has the vacuum removed from the support
of the $n$ symbols' joint quantum state. This receiver requires the additional
ability to perform multimode optical phase-space displacements which are
realizable using a beamsplitter and a laser.

\end{abstract}

One of the most important questions in quantum information theory is to
determine the maximum rate at which it is possible to transmit data error-free
over many independent uses of a noisy quantum channel. In the spirit of
Shannon~\cite{bell1948shannon}, this quantity is known as the
\textit{classical} capacity of a quantum channel because it has to do with the
transmission of bits, \textquotedblleft classical\textquotedblright\ data,
over a quantum channel. Holevo, Schumacher, and Westmoreland (HSW) made
partial progress on this question by providing a good lower bound on any
channel's classical capacity \cite{H98,PhysRevA.56.131}. For many channels,
the HSW\ lower bound is equal to the capacity, but in general, it is not
\cite{H09}.

A channel for which the HSW\ lower bound is equal to the classical capacity is
the pure-loss bosonic channel \cite{GGLMSY04}. This capacity result follows
because the HSW lower bound coincides with the Yuen-Ozawa upper bound on the
channel's classical capacity \cite{YO93}. The pure-loss bosonic channel is a
reasonable model of free-space or fiber-optic communication \cite{S09} and has
the following Heisenberg-picture specification:%
\begin{equation}
\hat{b}=\sqrt{\eta}\ \hat{a}+\sqrt{1-\eta}\ \hat{e}, \label{eq:pure-loss-bos}%
\end{equation}
where $\hat{a}$, $\hat{b}$, and $\hat{e}$ are the EM\ field mode operators for
the sender, receiver, and environment, respectively, and $\eta\in\left[
0,1\right]  $ is a transmissivity parameter determining what fraction of
photons make it to the receiver on average. An assumption for this
channel is that the environmental input is the vacuum state. Another realistic
assumption usually made when calculating this channel's classical capacity is
that the sender is constrained to have a finite mean photon-number budget
of~$N_{S}$ (otherwise, the channel's classical capacity is infinite). In this
case, the channel's classical capacity is equal to $g\left(  \eta
N_{S}\right)  $, where%
\[
g\left(  x\right)  \equiv\left(  x+1\right)  \log_{2}\left(  x+1\right)
-x\log_{2}x.
\]

The authors of Ref.~\cite{GGLMSY04} proved that an encoding strategy for
achieving the pure-loss channel's classical capacity is to generate
tensor-product coherent-state codewords randomly according to a complex,
isotropic Gaussian distribution with variance $N_{S}$ (this allows the
strategy to meet the mean photon budget constraint of $N_{S}$). After doing
so, the codebook consists of coherent-state codewords of the form:%
\begin{equation}
\left\vert \alpha^{n}\left(  m\right)  \right\rangle \equiv\left\vert
\alpha_{1}\left(  m\right)  \right\rangle \otimes\left\vert \alpha_{2}\left(
m\right)  \right\rangle \otimes\cdots\otimes\left\vert \alpha_{n}\left(
m\right)  \right\rangle , \label{eq:coh-state-codewords}%
\end{equation}
where $m\in \mathcal{M}$ indicates the classical message to be sent and $\alpha_{1}\left(
m\right)  $, \ldots, $\alpha_{n}\left(  m\right)  \in\mathbb{C}$ (these are
the independent realizations of the complex Gaussian random variable with
variance $N_{S}$, where $\left\vert \alpha_{i}\left(  m\right)  \right\rangle
$ is a coherent state). Recently, we showed that a sequential decoding
strategy consisting of binary-outcome quantum measurements of the form%
\begin{equation}
\left\{  \left\vert \alpha^{n}\left(  m\right)  \right\rangle \left\langle
\alpha^{n}\left(  m\right)  \right\vert ,I^{\otimes n}-\left\vert \alpha
^{n}\left(  m\right)  \right\rangle \left\langle \alpha^{n}\left(  m\right)
\right\vert \right\}  , \label{eq:bin-meas-single-send}%
\end{equation}
suffices to achieve the classical capacity of this channel \cite{GTW12}. This
result gives an explicit, physical way to realize a capacity-achieving
receiver in terms of linear-optical displacements and a coherent,
non-demolition \textquotedblleft vacuum-or-not\textquotedblright\ measurement.

A natural multi-user extension of a single-sender, single-receiver channel is one with two senders and
one receiver (a multiple-access channel). Determining strategies for
communication over such channels will be important for multi-user
communication in free space. Yen and Shapiro considered a simple multi-user
extension of the pure-loss bosonic channel in (\ref{eq:pure-loss-bos}), and
they called it the pure-interference multiple access channel
(MAC)\ \cite{YS05}. It has the following input-output Heisenberg-picture
specification:%
\begin{equation}
\hat{c}=\sqrt{\eta}\ \hat{a}+\sqrt{1-\eta}\ \hat{b}, \label{eq:bosonic-MAC}%
\end{equation}
where $\hat{a}$, $\hat{b}$, and $\hat{c}$ are the EM\ field mode operators for
the first sender, the second sender, and the receiver, respectively, and
$\eta\in\left[  0,1\right]  $ is an interference parameter determining how
much the senders' transmissions mix. In this model, the only noise that occurs
is due to the mixing of the senders' transmissions. Also, we allow the first
sender a mean photon budget of $N_{S_{A}}$ and the second sender a budget of
$N_{S_{B}}$.

Yen and Shapiro called the above channel the \textquotedblleft coherent-state
MAC\textquotedblright\ if the senders are restricted to using only
coherent-state input codewords, and they proved the capacity region in this
case has the following form:%
\begin{align}
R_{1}  &  \leq g\left(  \eta N_{S_{A}}\right)  ,\nonumber\\
R_{2}  &  \leq g\left(  \left(  1-\eta\right)  N_{S_{B}}\right)  ,\nonumber\\
R_{1}+R_{2}  &  \leq g\left(  \eta N_{S_{A}}+\left(  1-\eta\right)  N_{S_{B}%
}\right)  , \label{eq:YS-region}%
\end{align}
where $R_{1}$ and $R_{2}$ are the first and second senders' communication
rates, respectively. They proved this result by invoking Winter's theorem for
coding over a general quantum multiple access channel \cite{W01}, and they
showed that, in principle, a quantum strategy at the receiver can outperform a
conventional classical strategy such as homodyne or heterodyne detection. The
first sender chooses a coherent-state codebook of the form $\left\{
\left\vert \alpha^{n}\left(  l\right)  \right\rangle \right\}  _{l}$, and the
second sender similarly chooses a codebook of the form $\left\{  \left\vert
\beta^{n}\left(  m\right)  \right\rangle \right\}  _{m}$, with the codewords
defined similarly as in (\ref{eq:coh-state-codewords}). If the first sender
chooses message~$l \in \mathcal{L}$ and the second sender chooses
message~$m\in \mathcal{M}$, then the state
produced at the output of the channel is of the form:%
\begin{equation}
\left\vert \gamma^{n}\left(  l,m\right)  \right\rangle \equiv\left\vert
\gamma_{1}\left(  l,m\right)  \right\rangle \otimes\left\vert \gamma
_{2}\left(  l,m\right)  \right\rangle \otimes\cdots\otimes\left\vert
\gamma_{n}\left(  l,m\right)  \right\rangle , \label{eq:output-states}%
\end{equation}
where%
\begin{equation}
\gamma_{i}\left(  l,m\right)  \equiv\sqrt{\eta}\ \alpha_{i}\left(  l\right)
+\sqrt{1-\eta}\ \beta_{i}\left(  m\right)  . \label{eq:gam-mix-def}%
\end{equation}
In spite of finding the above physical realization for the encoder, Yen and Shapiro
left open the question of giving a physically-realizable form for a quantum
receiver that achieves the above rate region.\footnote{Since their result
relies on Winter's \cite{W01}, the collective measurement at the receiving end
is a \textquotedblleft square-root\textquotedblright\ measurement. This
measurement is well-known within the quantum information theory community, but
it is unclear how one might implement it with optical devices.}

In this paper, we prove that in some cases a sequential decoding strategy,
realized explicitly with optical devices (and a multimode non-destructive vacuum or not
measurement for which we do not know a structured optical realization yet), can achieve the rate region in
(\ref{eq:YS-region}) for the pure-interference bosonic multiple access
channel. This sequential decoder is a natural extension of the strategy in
(\ref{eq:bin-meas-single-send}) for the single-sender channel. The receiver
sequentially tests pairs of codewords, by performing binary-outcome quantum
measurements of the following form:%
\begin{equation}
\left\{  \left\vert \gamma^{n}\left(  l,m\right)  \right\rangle \left\langle
\gamma^{n}\left(  l,m\right)  \right\vert ,I^{\otimes n}-\left\vert \gamma
^{n}\left(  l,m\right)  \right\rangle \left\langle \gamma^{n}\left(
l,m\right)  \right\vert \right\}  . \label{eq:pair-decoder}%
\end{equation}
The cases for which this sequential decoding strategy achieves the rates in
(\ref{eq:YS-region}) have to do with the mean-photon number constraints
$N_{S_{A}}$ and $N_{S_{B}}$ and the transmissivity $\eta$, and we later outline specifically for which values of
these parameters the decoder in (\ref{eq:pair-decoder}) can achieve the rate
region in (\ref{eq:YS-region}).

We structure this paper as follows. In the next section, we overview some basic definitions
that we use throughout the paper. In Section~\ref{sec:main-thm},
we state our main theorem regarding sequential decoding
for a general pure-state multiple access channel with two classical inputs and one pure-state
quantum output. This section outlines a proof of this theorem with the bulk of it appearing
in Appendix~\ref{sec:err-analysis}. In Section~\ref{sec:rec-alg-bos-MAC}, we apply
this theorem to the pure-interference bosonic multiple access channel. Section~\ref{sec:examples} discusses some
cases for which a rate region achievable with sequential decoding is
equivalent to the Yen-Shapiro region in (\ref{eq:YS-region}). Finally, we conclude in Section~\ref{sec:conc} with a
summary and some open questions.

\section{Notation and Definitions}

We denote pure states of a quantum
 system $A$ with a \emph{ket} $\left\vert \phi\right\rangle ^{A}$ and the
corresponding density operator as $\phi^{A}=\left\vert \phi\right\rangle
\!\left\langle \phi\right\vert ^{A}$. All kets that are quantum states have
unit norm, and all density operators are positive semi-definite with unit
trace. Let
$
H(A)_{\rho}\equiv-\text{Tr}\left\{  \rho^{A}\log_2\rho^{A}\right\}
$
be the von Neumann entropy of the state $\rho^{A}$. For a state $\sigma^{ABC}%
$, we define the quantum conditional entropy
$
H(A|B)_{\sigma}\equiv H(AB)_{\sigma}-H(B)_{\sigma}
$
and the quantum mutual information
$
I(A;B)_{\sigma}\equiv H(A)_{\sigma}+H(B)_{\sigma}-H(AB)_{\sigma}.
$
In order to describe the \textquotedblleft distance\textquotedblright\ between
two quantum states, we use the notion of \emph{trace distance}. The trace
distance between states $\sigma$ and $\rho$ is
\[
\Vert\sigma-\rho\Vert_{1}=\mathrm{Tr}\left\vert \sigma-\rho\right\vert ,
\]
where $|X|=\sqrt{X^{\dagger}X}$. Two states that are similar have trace
distance close to zero, whereas states that are perfectly distinguishable have
trace distance equal to two.

The min-entropy $H_{\min}\left(  B\right)  _{\rho}$ of a quantum state
$\rho^{B}$ is equal to the negative logarithm of its maximal eigenvalue:%
\[
H_{\min}\left(  B\right)  _{\rho}\equiv-\log_2\left(  \inf_{\lambda\in
\mathbb{R}}\left\{  \lambda:\rho\leq\lambda I\right\}  \right)  ,
\]
and the conditional min-entropy of a classical-quantum state $\rho^{XB}%
\equiv\sum_{x}p_{X}\left(  x\right)  \left\vert x\right\rangle \left\langle
x\right\vert ^{X}\otimes\rho_{x}^{B}$ with classical system $X$ and quantum
system $B$ is as follows:%
\[
H_{\min}\left(  B|X\right)  _{\rho}\equiv\inf_{x\in\mathcal{X}}H_{\min}\left(
B\right)  _{\rho_{x}}.
\]
This definition of conditional min-entropy, where the conditioning system is
classical, implies the following operator inequality:%
\begin{equation}
\forall x\ \ \ \rho_{x}^{B} \leq2^{-H_{\min}\left(  B|X\right)  _{\rho}}I^{B}.
\label{eq:cond-min-entropy-bound}%
\end{equation}

\section{General scheme for a pure-state output multiple access channel}
\label{sec:main-thm}
We now prove a general result regarding rates that are achievable over a
pure-state multiple access channel of the following form:%
\begin{equation}
x,y\rightarrow\left\vert \phi_{x,y}\right\rangle , \label{eq:pure-state-MAC}%
\end{equation}
such that Sender~1 input the letter $x$, Sender~2 inputs the letter $y$, and
the receiver obtains the quantum state $\left\vert \phi_{x,y}\right\rangle $
at the output of the channel.

\begin{theorem}
\label{thm:main-theorem}Suppose that the receiver of the pure-state multiple
access channel in (\ref{eq:pure-state-MAC}) is restricted to using a
sequential decoder with binary-outcome tests of the form $\left\{  \left\vert
\phi_{x^{n},y^{n}}\right\rangle \left\langle \phi_{x^{n},y^{n}}\right\vert
,I-\left\vert \phi_{x^{n},y^{n}}\right\rangle \left\langle \phi_{x^{n},y^{n}%
}\right\vert \right\}  $. Then the following rate region is achievable for
communication over this channel:%
\begin{align}
R_{1}  &  \leq H_{\min}\left(  B|Y\right)  ,\nonumber\\
R_{2}  &  \leq H\left(  B|X\right)  ,\nonumber\\
R_{1}+R_{2}  &  \leq H\left(  B\right)  , \label{eq:first-region}%
\end{align}
where the entropies are with respect to a classical-quantum state of the
following form, for some distributions $p_{X}\left(  x\right)  $ and
$p_{Y}\left(  y\right)  $:%
\[
\sum_{x,y}p_{X}\left(  x\right)  p_{Y}\left(  y\right)  \left\vert
x\right\rangle \left\langle x\right\vert ^{X}\otimes\left\vert y\right\rangle
\left\langle y\right\vert ^{Y}\otimes\left\vert \phi_{x,y}\right\rangle
\left\langle \phi_{x,y}\right\vert ^{B}.
\]

\end{theorem}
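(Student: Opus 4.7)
The plan is to extend the random-coding, sequential-decoding analysis of the single-sender channel \cite{GTW12} to the two-sender setting. I would generate independent random codebooks: sender~1 draws $2^{nR_{1}}$ codewords $x^{n}(l)$ i.i.d.\ from $p_{X}^{n}$, and sender~2 draws $2^{nR_{2}}$ codewords $y^{n}(m)$ i.i.d.\ from $p_{Y}^{n}$. When the pair $(l,m)$ is transmitted, the channel outputs $|\phi_{x^{n}(l),y^{n}(m)}\rangle$, and the receiver sequentially applies the binary-outcome test $\{|\phi_{x^{n}(l'),y^{n}(m')}\rangle\langle\phi_{x^{n}(l'),y^{n}(m')}|,I-|\phi_{x^{n}(l'),y^{n}(m')}\rangle\langle\phi_{x^{n}(l'),y^{n}(m')}|\}$ over candidate pairs $(l',m')$ in a fixed order, declaring the first pair that passes.

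I would next invoke the noncommutative union bound of Hayashi--Nagaoka (or Sen) to reduce the sequential error probability to the expected sum, over wrong pairs, of the squared overlaps $|\langle\phi_{x^{n}(l'),y^{n}(m')}|\phi_{x^{n}(l),y^{n}(m)}\rangle|^{2}$; the ``missed-detection'' term vanishes since the correct-pair projector is exactly the transmitted pure state. Splitting the wrong pairs into three disjoint classes---(i) only $l'\neq l$, (ii) only $m'\neq m$, and (iii) both differ---and averaging over the independent random codewords produces, respectively, expressions of the form $\langle\phi|\rho^{B^{n}}_{y^{n}}|\phi\rangle$, $\mathrm{Tr}[(\rho^{B^{n}}_{x^{n}})^{2}]$, and $\mathrm{Tr}[(\rho^{B^{n}})^{2}]$. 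Class~(i) is handled directly: the operator inequality (\ref{eq:cond-min-entropy-bound}) combined with additivity of the min-entropy on tensor products bounds the per-pair overlap by $2^{-nH_{\min}(B|Y)}$, so the union bound over the $2^{nR_{1}}$ wrong values of $l'$ is affordable as long as $R_{1}<H_{\min}(B|Y)$.

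For classes (ii) and (iii), the same max-eigenvalue argument would only yield $H_{\min}$ exponents, so to achieve the von Neumann entropies $H(B|X)$ and $H(B)$ claimed in the theorem I would insert conditional and unconditional typical-subspace projectors into the analysis (as a ``mental'' tool, without modifying the decoder itself). Using that on these typical subspaces $\rho^{B^{n}}_{x^{n}}\leq 2^{-n(H(B|X)-\delta)}\Pi^{B^{n}}_{x^{n}}$ and $\rho^{B^{n}}\leq 2^{-n(H(B)-\delta)}\Pi^{B^{n}}$, and invoking the gentle-measurement lemma to absorb the small mass outside the typical subspace, sharpens the per-pair overlap estimates to $2^{-n(H(B|X)-\delta)}$ and $2^{-n(H(B)-\delta)}$, which supports $R_{2}<H(B|X)$ and $R_{1}+R_{2}<H(B)$. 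The main obstacle will be cleanly integrating these ``virtual'' typical projectors with the noncommutative union bound applied to the unmodified pure-state sequential decoder and tracking that all typicality perturbations vanish as $n\to\infty$; once this is done, a standard expurgation yields a deterministic code whose maximum error probability vanishes, establishing the region (\ref{eq:first-region}).
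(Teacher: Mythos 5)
Your proposal follows essentially the same route as the paper's proof: random i.i.d.\ codebooks, Sen's non-commutative union bound applied to the unmodified pure-state sequential decoder, a three-way split of the wrong-pair error terms, a largest-eigenvalue (min-entropy) bound for the $l'\neq l$ class, and conditionally/unconditionally typical projectors inserted only into the analysis to upgrade the $m'\neq m$ and both-differ classes to the von Neumann exponents $H(B|X)$ and $H(B)$. The one imprecision---that the missed-detection term ``vanishes'' because the test projector equals the transmitted pure state---is harmless: once the typical projectors are smoothed in, that term is instead $O(\sqrt{\epsilon})$ by the Gentle Operator Lemma, which is precisely the bookkeeping you already flag as the remaining work.
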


\begin{proof}
We break the proof into several parts:\ codebook construction, a discussion of
the sequential decoder, and a detailed error analysis appearing in Appendix~\ref{sec:err-analysis}.

\textbf{Codebook Construction.} Before communication begins, Sender~1,
Sender~2, and the receiver agree upon a codebook. We allow Sender~1 to select
a codebook randomly according to the distribution $p_{X}\left(  x\right)  $,
and Sender~2 likewise to select one according to $p_{Y}\left(  y\right)  $.
So, for every message $l\in\mathcal{L}\equiv\left\{  1,\ldots,2^{nR_{1}%
}\right\}  $, generate a codeword $x^{n}\left(  l\right)  \equiv x_{1}\left(
l\right)  \cdots x_{n}\left(  l\right)  $ randomly and independently according
to%
\[
p_{X^{n}}\left(  x^{n}\right)  \equiv\prod\limits_{i=1}^{n}p_{X}\left(
x_{i}\right)  .
\]
Similarly, for every message $m\in\mathcal{M}\equiv\left\{  1,\ldots
,2^{nR_{2}}\right\}  $, generate a codeword $y^{n}\left(  m\right)  \equiv
y_{1}\left(  m\right)  \cdots y_{n}\left(  m\right)  $ randomly and
independently according to%
\[
p_{Y^{n}}\left(  y^{n}\right)  \equiv\prod\limits_{i=1}^{n}p_{Y}\left(
y_{i}\right)  .
\]

\textbf{Sequential Decoding.} Transmitting the codewords $x^{n}\left(
l\right)  $ and $y^{n}\left(  m\right)  $ through $n$ uses of the channel
$x,y\rightarrow\left\vert \phi_{x,y}\right\rangle $ leads to the following
quantum state at the receiver's output:%
\[
\left\vert \phi_{x^{n}\left(  l\right)  ,y^{n}\left(  m\right)  }\right\rangle
\equiv\left\vert \phi_{x_{1}\left(  l\right)  ,y_{1}\left(  m\right)
}\right\rangle \otimes\cdots\otimes\left\vert \phi_{x_{n}\left(  l\right)
,y_{n}\left(  m\right)  }\right\rangle .
\]
Upon receiving the quantum codeword $\left\vert \phi_{x^{n}\left(  l\right)
,y^{n}\left(  m\right)  }\right\rangle $, the receiver performs a sequence of
binary-outcome quantum measurements to determine the classical codewords
$x^{n}\left(  l\right)  $ and $y^{n}\left(  m\right)  $\ that the senders
transmitted. He first \textquotedblleft asks,\textquotedblright%
\ \textquotedblleft Is it the first codeword pair?\textquotedblright\ by
performing the measurement%
\[
\{\phi_{x^{n}\left(  1\right)  ,y^{n}\left(  1\right)  },I^{\otimes n}%
-\phi_{x^{n}\left(  1\right)  ,y^{n}\left(  1\right)  }\},
\]
where we abbreviate%
\[
\phi_{x^{n}\left(  1\right)  ,y^{n}\left(  1\right)  }\equiv\left\vert
\phi_{x^{n}\left(  1\right)  ,y^{n}\left(  1\right)  }\right\rangle
\left\langle \phi_{x^{n}\left(  1\right)  ,y^{n}\left(  1\right)  }\right\vert
.
\]
If he receives the outcome \textquotedblleft yes,\textquotedblright\ then he
performs no further measurements and concludes that the senders transmitted
the codewords $x^{n}\left(  1\right)  $ and $y^{n}\left(  1\right)  $. If he
receives the outcome \textquotedblleft no,\textquotedblright\ then he performs
the measurement
\[
\{\phi_{x^{n}\left(  2\right)  ,y^{n}\left(  1\right)  },I^{\otimes n}%
-\phi_{x^{n}\left(  2\right)  ,y^{n}\left(  1\right)  }\}.
\]
to check if the senders transmitted the second codeword pair. Similarly, he
stops if he receives \textquotedblleft yes,\textquotedblright\ and otherwise,
he proceeds along similar lines. The order in which the receiver scans through
the codeword pairs is%
\begin{align*}
\left(  x^{n}\left(  1\right)  ,y^{n}\left(  1\right)  \right)   &
\rightarrow\cdots\rightarrow\left(  x^{n}\left(  \left\vert \mathcal{L}%
\right\vert \right)  ,y^{n}\left(  1\right)  \right)  \rightarrow\\
\left(  x^{n}\left(  1\right)  ,y^{n}\left(  2\right)  \right)   &
\rightarrow\cdots\rightarrow\left(  x^{n}\left(  \left\vert \mathcal{L}%
\right\vert \right)  ,y^{n}\left(  2\right)  \right)  \rightarrow\\
&  \cdots\\
\left(  x^{n}\left(  1\right)  ,y^{n}\left(  m\right)  \right)   &
\rightarrow\cdots\rightarrow\left(  x^{n}\left(  l\right)  ,y^{n}\left(
m\right)  \right)  .
\end{align*}

The rest of proof is a detailed error analysis to show that the above scheme
works well. It proceeds similarly to Sen's proof for sequential
decoding of the quantum MAC \cite{S11}, and as such, we put it in
Appendix~\ref{sec:err-analysis}. Though, the difference between our error analysis and Sen's
is that we would like to employ a sequential decoder with measurements of the form
$\left\{  \left\vert
\phi_{x^{n},y^{n}}\right\rangle \left\langle \phi_{x^{n},y^{n}}\right\vert
,I-\left\vert \phi_{x^{n},y^{n}}\right\rangle \left\langle \phi_{x^{n},y^{n}%
}\right\vert \right\}  $, as opposed to the modified measurements that Sen employs
in his proof. This will allow us to have a physically-realizable
decoder for the pure-interference bosonic MAC. We clarify this point in Remark~\ref{rem:connection-to-Sen} of Appendix~\ref{sec:err-analysis}.

\end{proof}

\begin{corollary}
\label{rem:full-region-convex-hull}The following rate region is also
achievable, by considering a symmetric proof in which we smooth the channel to
be $\Pi\Pi_{M}\phi_{L,M}\Pi_{M}\Pi$ rather than $\Pi\Pi_{L}\phi_{L,M}\Pi
_{L}\Pi$ (see Appendix~\ref{sec:err-analysis}). After performing a symmetric error analysis, the resulting rate
region has the following form:%
\begin{align}
R_{1}  &  \leq H\left(  B|Y\right)  ,\,\,\,\,\,
R_{2}  \leq H_{\min}\left(  B|X\right)  ,\nonumber\\
R_{1}+R_{2}  &  \leq H\left(  B\right)  . \label{eq:second-region}%
\end{align}
The convex hull of the above region and the region from
Theorem~\ref{thm:main-theorem} is always achievable because it
corresponds to taking convex combinations of rate pairs from each
rate region and this amounts to a time-sharing strategy. This convex hull region is equivalent to the rate
region $R_1 \leq H(B|Y), R_2 \leq H(B|X), R_1 + R_2 \leq H(B)$ if the corner point%
\[
\left(  R_{1}=\min\left\{  H_{\min}\left(  B|Y\right)  ,I\left(  X;B\right)
\right\}  ,R_{2}=H\left(  B|X\right)  \right)
\]
of the region in (\ref{eq:first-region}) is equal to%
\[
\left(  R_{1}=I\left(  X;B\right)  ,R_{2}=H\left(  B|X\right)  \right)  ,
\]
and if the corner point%
\[
\left(  R_{1}=H\left(  B|Y\right)  ,R_{2}=\min\left\{  H_{\min}\left(
B|X\right)  ,I\left(  Y;B\right)  \right\}  \right)
\]
of the region in (\ref{eq:second-region}) is equal to%
\[
\left(  R_{1}=H\left(  B|Y\right)  ,R_{2}=I\left(  Y;B\right)  \right)  .
\]
We consider these conditions in Section~\ref{sec:rec-alg-bos-MAC}\ when we
reason about the rate regions achievable with sequential decoding for the
pure-interference bosonic multiple access channel.
\end{corollary}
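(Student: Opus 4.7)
The plan is to establish three pieces in turn, corresponding to the three assertions of the corollary. First, for the symmetric region in (\ref{eq:second-region}), I would rerun the error analysis of Theorem~\ref{thm:main-theorem} with the roles of the two senders interchanged. In the appendix, the ``smoothing'' projector $\Pi_L$ is keyed to the average codeword state of Sender~1, which is what makes the min-entropy conditioning appear on $Y$ in the bound for $R_1$; substituting $\Pi_M$ (keyed to Sender~2's average state) shifts the min-entropy onto the $R_2$ constraint while leaving the sum-rate bound $H(B)$ unchanged, since $H(B)$ is symmetric in $X$ and $Y$. No inequality changes direction and no new lemma is needed---it is only a relabeling.

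Second, for the convex-hull claim, I would appeal to a standard time-sharing argument. Given achievable strategies $\mathcal{E}$ and $\mathcal{E}'$ attaining pairs $(R_1,R_2)$ and $(R_1',R_2')$ in the two regions respectively, run $\mathcal{E}$ on the first $\lambda n$ channel uses and $\mathcal{E}'$ on the remaining $(1-\lambda)n$ uses. The aggregate rate is the convex combination, and a union bound on the two block error probabilities keeps the overall error vanishing. Hence every point of the convex hull is achievable.

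Third, and this is the geometric heart of the statement, I would verify that under the stated corner-point hypotheses the union of the two pentagons already contains all extreme points of the target pentagon $\{R_1\leq H(B|Y),\ R_2\leq H(B|X),\ R_1+R_2\leq H(B)\}$. Its non-trivial corners are the two Holevo points
\[
(I(X;B),\ H(B|X)) \quad \text{and} \quad (H(B|Y),\ I(Y;B)).
\]
The hypothesis $\min\{H_{\min}(B|Y),I(X;B)\}=I(X;B)$ is exactly $I(X;B)\leq H_{\min}(B|Y)$, which puts the first Holevo corner inside (\ref{eq:first-region}); the symmetric hypothesis puts the second Holevo corner inside (\ref{eq:second-region}). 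The remaining extreme points $(0,0)$, $(H(B|Y),0)$, $(0,H(B|X))$ lie in both regions because $H_{\min}\leq H$. A pentagon is the convex hull of its five extreme points, so the convex hull of the two achievable regions recovers the full target pentagon.

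The main obstacle I anticipate is really the first step: one must check that the appendix proof is genuinely parametric in which sender is conditioned on, i.e., that the scanning order of the sequential measurements and any intermediate smoothing/typicality steps can be swapped with no side effect. Once that is verified, the second and third parts are, respectively, a textbook time-sharing argument and a finite check on the five vertices of a pentagon.
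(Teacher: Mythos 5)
Your proposal is correct and takes essentially the same route as the paper, which states this corollary with no further proof beyond what you reconstruct: rerun the appendix error analysis with $\Pi_{M}$ in place of $\Pi_{L}$ to obtain the symmetric region, invoke time-sharing for the convex hull, and check that the two Holevo corner points $(I(X;B),H(B|X))$ and $(H(B|Y),I(Y;B))$ are captured under the stated hypotheses; your explicit verification on the five extreme points of the pentagon is a useful elaboration of what the paper leaves implicit. One small slip: the points $(H(B|Y),0)$ and $(0,H(B|X))$ need not lie in \emph{both} regions --- since $H_{\min}(B|Y)\leq H(B|Y)$, the point $(H(B|Y),0)$ can violate the constraint $R_{1}\leq H_{\min}(B|Y)$ of the region in (\ref{eq:first-region}) --- but each lies in at least one of the two regions, which is all the convex-hull argument requires.
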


\section{Explicit receiver for the pure-interference bosonic multiple access
channel}

\label{sec:rec-alg-bos-MAC}The strategy for achieving the capacity of the
coherent-state MAC is for the two senders to induce a channel of the form in
(\ref{eq:pure-state-MAC}), by selecting $\alpha,\beta\in{\mathbb{C}}$ and
preparing coherent states $\left\vert \alpha\right\rangle $ and $\left\vert
\beta\right\rangle $ at the input of the channel in (\ref{eq:bosonic-MAC}).
The resulting induced channel to the receiver is of the following form:%
\[
\alpha,\beta\rightarrow|\sqrt{\eta}\alpha+\sqrt{1-\eta}\beta\rangle.
\]
By choosing the distributions $p_{X}\left(  x\right)  $ and $p_{Y}\left(
y\right)  $ in Theorem~\ref{thm:main-theorem}\ to be Gaussian as follows:%
\begin{align*}
p_{N_{S_{A}}}\left(  \alpha\right)   &  \equiv\left(  {1}/{\pi}N_{S_{A}%
}\right)  \exp\left\{  {-\left\vert \alpha\right\vert ^{2}}/N_{S_{A}}\right\}
,\\
p_{N_{S_{B}}}\left(  \beta\right)   &  \equiv\left(  {1}/{\pi}N_{S_{B}%
}\right)  \exp\left\{  {-\left\vert \beta\right\vert ^{2}}/N_{S_{B}}\right\}
,
\end{align*}
we have that the convex hull of the regions in (\ref{eq:first-region}) and
(\ref{eq:second-region}) is achievable (see
Corollary~\ref{rem:full-region-convex-hull}). For our case, the various
entropies become as follows:%
\begin{align*}
H\left(  B|X\right)   &  =g\left(  \left(  1-\eta\right)  N_{S_{B}}\right)
,\\
H\left(  B|Y\right)   &  =g\left(  \eta N_{S_{A}}\right)  ,\\
H_{\min}\left(  B|X\right)   &  =\log_{2}\left(  \left(  1-\eta\right)
N_{S_{B}}+1\right)  ,\\
H_{\min}\left(  B|Y\right)   &  =\log_{2}\left(  \eta N_{S_{A}}+1\right)  ,\\
H\left(  B\right)   &  =g\left(  \eta N_{S_{A}}+\left(  1-\eta\right)
N_{S_{B}}\right)  .
\end{align*}
According to Corollary~\ref{rem:full-region-convex-hull}, we can show that
this strategy achieves the full Yen-Shapiro region in (\ref{eq:YS-region}) if
the both of the following conditions hold%
\begin{align}
g\left(  N^{\prime}\right)  -g\left(  \left(  1-\eta\right)  N_{S_{B}}\right)
&  \leq\log_{2}\left(  \eta N_{S_{A}}+1\right) , \;{\text{and}}  \nonumber\\
g\left(  N^{\prime}\right)  -g\left(  \eta N_{S_{A}}\right)   &  \leq\log
_{2}\left(  \left(  1-\eta\right)  N_{S_{B}}+1\right)  ,
\label{eq:equal-regions-condition}%
\end{align}
where $N^{\prime}\equiv\eta N_{S_{A}}+\left(  1-\eta\right)  N_{S_{B}}$.
Otherwise, the convex hull region from
Corollary~\ref{rem:full-region-convex-hull} is contained in the Yen-Shapiro region.

The quantum codebooks selected from the ensembles $\{p_{N_{S_{A}}}\left(
\alpha\right)  ,\ |\alpha\rangle\}$ and $\{p_{N_{S_{B}}}\left(  \beta\right)
,\ |\beta\rangle\}$ have the respective forms $\left\{  \left\vert \alpha
^{n}\left(  l\right)  \right\rangle \right\}  _{l}$ and $\left\{  \left\vert
\beta^{n}\left(  m\right)  \right\rangle \right\}  _{m}$, with the codewords
defined similarly as in (\ref{eq:coh-state-codewords}). The sequential decoder
consists of binary measurements formed from the output states in
(\ref{eq:output-states}) for all $l\in\mathcal{L}$, $m\in\mathcal{M}$:%
\begin{equation}
\left\{  \left\vert \gamma^{n}\left(  l,m\right)  \right\rangle \left\langle
\gamma^{n}\left(  l,m\right)  \right\vert ,I^{\otimes n}-\left\vert \gamma
^{n}\left(  l,m\right)  \right\rangle \left\langle \gamma^{n}\left(
l,m\right)  \right\vert \right\}  . \label{eq:seq-decoder-bosonic}%
\end{equation}
Observing that%
\[
\left\vert \gamma^{n}\left(  l,m\right)  \right\rangle =D\left(  \gamma
_{1}\left(  l,m\right)  \right)  \otimes\cdots\otimes D\left(  \gamma
_{n}\left(  l,m\right)  \right)  \left\vert 0\right\rangle ^{\otimes n},
\]
where $\gamma_{i}\left(  l,m\right)  $ is defined in (\ref{eq:gam-mix-def}),
 $D\left(  \alpha\right)  \equiv\exp\left\{  \alpha\hat{a}^{\dag}%
-\alpha^{\ast}\hat{a}\right\}  $ is the well-known unitary \textquotedblleft
displacement\textquotedblright\ operator from quantum optics \cite{P96}, and
$\left\vert 0\right\rangle ^{\otimes n}$ is the $n$-fold tensor product vacuum
state, it is clear that the decoder can implement the measurement in
(\ref{eq:seq-decoder-bosonic}) in three steps:

\begin{enumerate}
\item Displace the $n$-mode codeword state by%
\[
D\left(  -\gamma_{1}\left(  l,m\right)  \right)  \otimes\cdots\otimes D\left(
-\gamma_{n}\left(  l,m\right)  \right)  ,
\]
by employing highly asymmetric beam-splitters with a strong local oscillator
\cite{P96}.

\item Perform a non-destructive \textquotedblleft vacuum-or-not\textquotedblright\ measurement
of the form%
\[
\left\{  \left\vert 0\right\rangle \left\langle 0\right\vert ^{\otimes
n},\ \ I^{\otimes n}-\left\vert 0\right\rangle \left\langle 0\right\vert
^{\otimes n}\right\}  .
\]
If the vacuum outcome occurs, decode as the codeword pair $\left(  l,m\right)
$. Otherwise, proceed.

\item Displace by $D\left(  \gamma_{1}\left(  l,m\right)  \right)
\otimes\cdots\otimes D\left(  \gamma_{n}\left(  l,m\right)  \right)  $ with
the same method as in Step 1.
\end{enumerate}

The receiver just iterates this strategy for every codeword pair in the
codebooks.

\section{Examples}
\label{sec:examples}
This section discusses a few examples such that the convex hull region from
Corollary~\ref{rem:full-region-convex-hull} is either equal to the full
Yen-Shapiro rate region in (\ref{eq:YS-region}) or not.

Our first example appears in Figure~\ref{fig:regions}(a). By setting
$\eta=1/2$, $N_{S_{A}}=1$, and $N_{S_{B}}=1$, we find that the conditions in
(\ref{eq:equal-regions-condition}) do not hold, so that the convex hull region
from Corollary~\ref{rem:full-region-convex-hull} is not equal to the full
Yen-Shapiro region. Though, the convex hull region is nearly equal to the
Yen-Shapiro region, and it is significantly larger than the region given by a
classical strategy such as homodyne or heterodyne detection (see
Ref.~\cite{YS05}\ for a discussion of the rate regions resulting from these strategies).

Our second example appears in Figure~\ref{fig:regions}(b), where we set
$\eta=1/2$, $N_{S_{A}}=10$, and $N_{S_{B}}=8$. We find that the conditions in
(\ref{eq:equal-regions-condition}) do hold for these values, implying that the
convex hull region from Corollary~\ref{rem:full-region-convex-hull} is equal
to the full Yen-Shapiro region. Again, this region is significantly larger
than the region given by a classical strategy such as homodyne or heterodyne detection.%

\begin{figure}
[ptb]
\begin{center}
\includegraphics[
natheight=3.620100in,
natwidth=8.146500in,
width=3.7402in
]%
{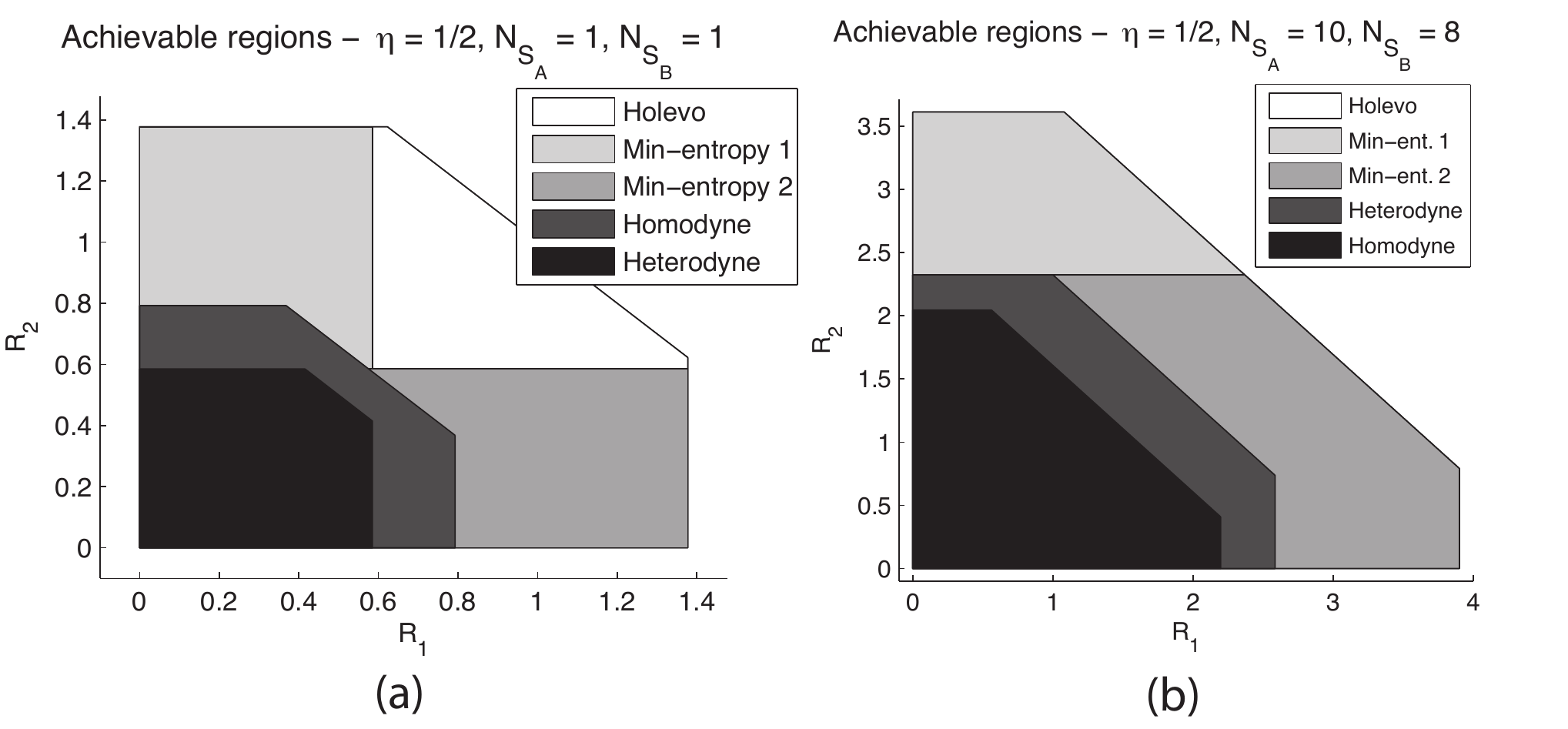}%
\caption{Rate regions achievable with various strategies at the receiving end
such as Holevo joint detection (with an unrealistic square-root measurement),
sequential decoding with a \textquotedblleft vacuum-or-not\textquotedblright%
\ measurement according to Corollary~\ref{rem:full-region-convex-hull},
heterodyne detection, or homodyne detection. The convex hull of the regions
entitled \textquotedblleft Min-entropy 1\textquotedblright\ and
\textquotedblleft Min-entropy 2\textquotedblright\ is the region given by
Corollary~\ref{rem:full-region-convex-hull}.}%
\label{fig:regions}%
\end{center}
\end{figure}

Figure~\ref{fig:equal-regions}\ more generally captures the values of the mean
input photon numbers $N_{S_{A}}$ and $N_{S_{B}}$ such that the region from
Corollary~\ref{rem:full-region-convex-hull}\ is equal to the Yen-Shapiro
region. The figure plots these values for two fixed values of the transmissivity:
$\eta=1/2$ and $\eta=4/5$. A simple observation is that the regions are
equivalent for higher mean input photon numbers. This result follows because
the min-entropy boundaries are equivalent to the some of the boundaries in the
heterodyne detection region (c.f., Ref.~\cite{YS05}), and we know that heterodyne
detection becomes optimal in the high photon-number limit.%
\begin{figure}
[ptb]
\begin{center}
\includegraphics[
natheight=2.892800in,
natwidth=6.447200in,
height=1.5592in,
width=3.4411in
]%
{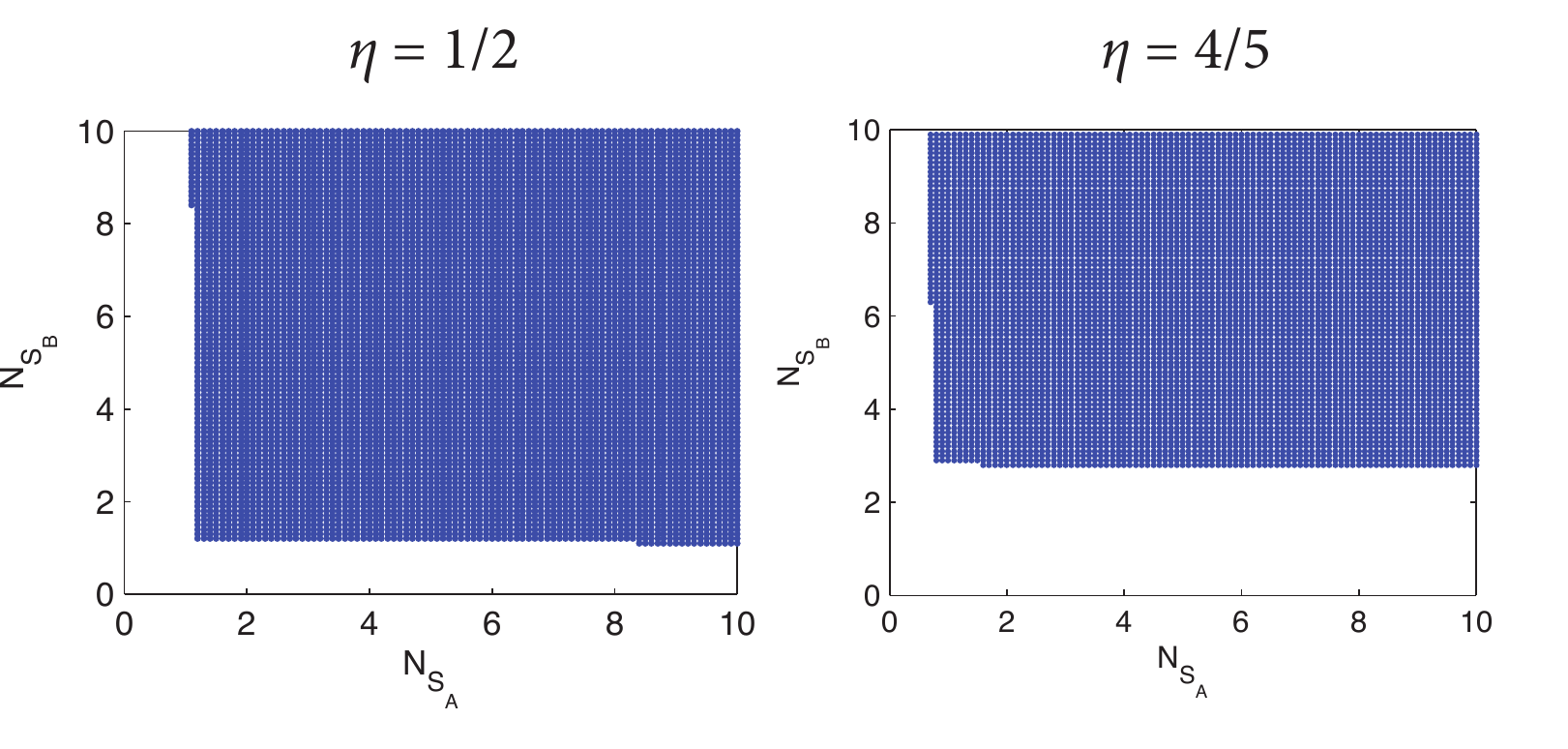}%
\caption{ The shaded region indicates the values of the mean input photon
numbers $N_{S_{A}}$ and $N_{S_{B}}$ such that the region from
Corollary~\ref{rem:full-region-convex-hull}\ is equal to the Yen-Shapiro
region.}%
\label{fig:equal-regions}%
\end{center}
\end{figure}

\section{Conclusion}
\label{sec:conc}
We have provided a near-explicit physically-realizable optical receiver such that two senders and a
receiver can achieve, in some cases, the Yen-Shapiro rate region in
(\ref{eq:YS-region}) for the pure-interference bosonic multiple access
channel. The scheme has the receiver perform binary-outcome quantum tests for
every codeword pair in the two codebooks of the senders. It is possible to
implement this strategy with unitary displacements (realizable
using a bank of highly-transmissive beamsplitters and strong
coherent-state local oscillators), and a \textquotedblleft
vacuum-or-not\textquotedblright\ measurement (for which an all-optical
structured realization still eludes us). However,
there is a suggestion for realizing this measurement using an atom-optical
coupled system with adiabatic STIRAP pulses~\cite{Oi2012}.

There are many open questions to consider. First, it seems natural to
conjecture that the sequential decoding algorithm in
Section~\ref{sec:rec-alg-bos-MAC} should be able to achieve the full
Yen-Shapiro rate region. It is a bit odd that its performance should
depend on the particular mathematical error analysis employed, either that in
Theorem~\ref{thm:main-theorem}\ or Corollary~\ref{rem:full-region-convex-hull}%
, given that the physical procedure for decoding is the same in both cases. It
is very likely our error analysis that is lacking, and we think that an
eventual proof of the \textquotedblleft quantum simultaneous decoding
conjecture\textquotedblright\ from Ref.~\cite{FHSSW11}\ might resolve this issue.

Yen and Shapiro found that employing squeezed states for an encoding could
achieve rates beyond the coherent-state region in (\ref{eq:YS-region}) for the
same values of $\eta$, $N_{S_{A}}$, and $N_{S_{B}}$ \cite{YS05}. It would be
interesting to develop a physically-realizable sequential decoding strategy
for this case. Though, it is not clear to us how to do so because the state at
the output of the channel in (\ref{eq:bosonic-MAC}) in this case will be a
mixed state, and as of now, we do not know how to realize such a decoder with
optical devices.

One can also achieve the rate region of the multiple access channel with a
successive decoder \cite{W01}, in which the receiver first decodes one
sender's message before decoding the other sender's message. It is of course
possible to do this in principle, but it is not clear to us how to implement
such a decoder with optical devices. If we knew how to realize a sequential
decoder for the channel in (\ref{eq:pure-loss-bos}) where the environment
injects thermal noise, then this should lead to an implementation of a
successive decoder for the pure-interference bosonic multiple access channel.

Finally, it is open to find a physically-realizable sequential decoding scheme
for an entanglement-assisted bosonic multiple access channel, for which an
achievable rate region was given in Ref.~\cite{XW11}. It is also open to
determine a physically-realizable decoder for the bosonic multiple access
channel with thermal noise \cite{YS05}, the bosonic broadcast channel
\cite{GS07,GSE07}, and the bosonic quantum interference channel \cite{GSW11}.
The authors thank Ivan Savov for a helpful discussion.

\bibliographystyle{IEEEtran}
\bibliography{Ref}

\appendices

\section{Typical Sequences and Typical Subspaces}

\label{sec:typ-review}Consider a density operator $\rho$ with the following
spectral decomposition:%
\[
\rho=\sum_{x}p_{X}\left(  x\right)  \left\vert x\right\rangle \left\langle
x\right\vert .
\]
The weakly typical subspace is defined as the span of all vectors such that
the sample entropy $\overline{H}\left(  x^{n}\right)  $ of their classical
label is close to the true entropy $H\left(  X\right)  $ of the distribution
$p_{X}\left(  x\right)  $ \cite{book2000mikeandike,W11}:%
\[
T_{\delta}^{X^{n}}\equiv\text{span}\left\{  \left\vert x^{n}\right\rangle
:\left\vert \overline{H}\left(  x^{n}\right)  -H\left(  X\right)  \right\vert
\leq\delta\right\}  ,
\]
where%
\begin{align*}
\overline{H}\left(  x^{n}\right)   &  \equiv-\frac{1}{n}\log\left(  p_{X^{n}%
}\left(  x^{n}\right)  \right)  ,\\
H\left(  X\right)   &  \equiv-\sum_{x}p_{X}\left(  x\right)  \log p_{X}\left(
x\right)  .
\end{align*}
The projector $\Pi_{\rho,\delta}^{n}$\ onto the typical subspace of $\rho$ is
defined as%
\[
\Pi_{\rho,\delta}^{n}\equiv\sum_{x^{n}\in T_{\delta}^{X^{n}}}\left\vert
x^{n}\right\rangle \left\langle x^{n}\right\vert ,
\]
where we have \textquotedblleft overloaded\textquotedblright\ the symbol
$T_{\delta}^{X^{n}}$ to refer also to the set of $\delta$-typical sequences:%
\[
T_{\delta}^{X^{n}}\equiv\left\{  x^{n}:\left\vert \overline{H}\left(
x^{n}\right)  -H\left(  X\right)  \right\vert \leq\delta\right\}  .
\]
The three important properties of the typical projector are as follows:%
\begin{align}
\text{Tr}\left\{  \Pi_{\rho,\delta}^{n}\rho^{\otimes n}\right\}   &
\geq1-\epsilon,\label{eq:prop-1-typ}\\
\text{Tr}\left\{  \Pi_{\rho,\delta}^{n}\right\}   &  \leq2^{n\left[  H\left(
X\right)  +\delta\right]  },\nonumber\\
2^{-n\left[  H\left(  X\right)  +\delta\right]  }\ \Pi_{\rho,\delta}^{n}  &
\leq\Pi_{\rho,\delta}^{n}\ \rho^{\otimes n}\ \Pi_{\rho,\delta}^{n}%
\leq2^{-n\left[  H\left(  X\right)  -\delta\right]  }\ \Pi_{\rho,\delta}^{n},
\label{eq:typ-prop-three}%
\end{align}
where the first property holds for arbitrary $\epsilon,\delta>0$ and
sufficiently large $n$. Consider an ensemble $\left\{  p_{X}\left(  x\right)
,\rho_{x}\right\}  _{x\in\mathcal{X}}$ of states. Suppose that each state
$\rho_{x}$ has the following spectral decomposition:%
\[
\rho_{x}=\sum_{y}p_{Y|X}\left(  y|x\right)  \left\vert y_{x}\right\rangle
\left\langle y_{x}\right\vert .
\]
Consider a density operator $\rho_{x^{n}}$ which is conditional on a classical
sequence $x^{n}\equiv x_{1}\cdots x_{n}$:%
\[
\rho_{x^{n}}\equiv\rho_{x_{1}}\otimes\cdots\otimes\rho_{x_{n}}.
\]
We define the weak conditionally typical subspace as the span of vectors
(conditional on the sequence $x^{n}$) such that the sample conditional entropy
$\overline{H}\left(  y^{n}|x^{n}\right)  $ of their classical labels is close
to the true conditional entropy $H\left(  Y|X\right)  $ of the distribution
$p_{Y|X}\left(  y|x\right)  p_{X}\left(  x\right)  $
\cite{book2000mikeandike,W11}:%
\[
T_{\delta}^{Y^{n}|x^{n}}\equiv\text{span}\left\{  \left\vert y_{x^{n}}%
^{n}\right\rangle :\left\vert \overline{H}\left(  y^{n}|x^{n}\right)
-H\left(  Y|X\right)  \right\vert \leq\delta\right\}  ,
\]
where%
\begin{align*}
\overline{H}\left(  y^{n}|x^{n}\right)   &  \equiv-\frac{1}{n}\log\left(
p_{Y^{n}|X^{n}}\left(  y^{n}|x^{n}\right)  \right)  ,\\
H\left(  Y|X\right)   &  \equiv-\sum_{x}p_{X}\left(  x\right)  \sum_{y}%
p_{Y|X}\left(  y|x\right)  \log p_{Y|X}\left(  y|x\right)  .
\end{align*}
The projector $\Pi_{\rho_{x^{n}},\delta}$ onto the weak conditionally typical
subspace of $\rho_{x^{n}}$ is as follows:%
\[
\Pi_{\rho_{x^{n}},\delta}\equiv\sum_{y^{n}\in T_{\delta}^{Y^{n}|x^{n}}%
}\left\vert y_{x^{n}}^{n}\right\rangle \left\langle y_{x^{n}}^{n}\right\vert
,
\]
where we have again overloaded the symbol $T_{\delta}^{Y^{n}|x^{n}}$ to refer
to the set of weak conditionally typical sequences:%
\[
T_{\delta}^{Y^{n}|x^{n}}\equiv\left\{  y^{n}:\left\vert \overline{H}\left(
y^{n}|x^{n}\right)  -H\left(  Y|X\right)  \right\vert \leq\delta\right\}  .
\]
The three important properties of the weak conditionally typical projector are
as follows:%
\begin{align}
\mathbb{E}_{X^{n}}\left\{  \text{Tr}\left\{  \Pi_{\rho_{X^{n}},\delta}%
\rho_{X^{n}}\right\}  \right\}   &  \geq1-\epsilon,\label{eq:prop-1-weak-typ}\\
\text{Tr}\left\{  \Pi_{\rho_{x^{n}},\delta}\right\}   &  \leq2^{n\left[
H\left(  Y|X\right)  +\delta\right]  },\label{eq:property-2-typical}\\
2^{-n\left[  H\left(  Y|X\right)  +\delta\right]  }\ \Pi_{\rho_{x^{n}}%
,\delta}  &  \leq\Pi_{\rho_{x^{n}},\delta}\ \rho_{x^{n}}\ \Pi_{\rho_{x^{n}%
},\delta}\label{eq:property-3-typical}\\
&  \leq2^{-n\left[  H\left(  Y|X\right)  -\delta\right]  }\ \Pi_{\rho_{x^{n}%
},\delta},\nonumber
\end{align}
where the first property holds for arbitrary $\epsilon,\delta>0$ and
sufficiently large $n$, and the expectation is with respect to the
distribution $p_{X^{n}}\left(  x^{n}\right)  $.

\section{Useful lemmas}

\label{sec:useful-lemmas}Here we collect some useful lemmas.

\begin{lemma}
[Gentle Operator Lemma \cite{itit1999winter,ON07}]\label{lem:gentle-operator}%
Let $\Lambda$ be a positive operator where $0\leq\Lambda\leq I$ (usually
$\Lambda$ is a POVM\ element), $\rho$ a state, and $\epsilon$ a positive
number such that the probability of detecting the outcome $\Lambda$ is high:%
\[
\text{Tr}\left\{  \Lambda\rho\right\}  \geq1-\epsilon.
\]
Then the measurement causes little disturbance to the state $\rho$:%
\[
\left\Vert \rho-\sqrt{\Lambda}\rho\sqrt{\Lambda}\right\Vert _{1}\leq
2\sqrt{\epsilon}.
\]

\end{lemma}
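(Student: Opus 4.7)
The plan is to reduce the problem to a two-vector calculation via purification, and then bound the resulting scalar expression by carefully exploiting the operator inequality $\sqrt{\Lambda} \ge \Lambda$. First I introduce a purification $|\psi\rangle^{AR}$ of $\rho^{A}$ on a reference system $R$; the operator $\sqrt{\Lambda}\rho\sqrt{\Lambda}$ then extends on $AR$ to the sub-normalized rank-one operator $|\phi\rangle\!\langle\phi|$ with $|\phi\rangle \equiv (\sqrt{\Lambda}\otimes I_{R})|\psi\rangle$. Since the trace norm is non-increasing under the partial trace, it suffices to bound $\| |\psi\rangle\!\langle\psi| - |\phi\rangle\!\langle\phi| \|_{1}$, and this difference is supported on the two-dimensional span of $|\psi\rangle$ and $|\phi\rangle$. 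Representing it in a convenient orthonormal basis of that subspace gives a $2\times 2$ Hermitian matrix whose trace and determinant are expressible in terms of $\tau \equiv \operatorname{Tr}(\Lambda\rho) = \langle\phi|\phi\rangle$ and $\sigma \equiv \operatorname{Tr}(\sqrt{\Lambda}\rho) = \langle\psi|\phi\rangle$; diagonalizing then yields
\[
\bigl\| |\psi\rangle\!\langle\psi| - |\phi\rangle\!\langle\phi| \bigr\|_{1} = \sqrt{(1-\tau)^{2} + 4(\tau - \sigma^{2})}.
\]

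Next I would exploit three facts about $\tau$ and $\sigma$: (i) $1-\tau \le \varepsilon$ by hypothesis; (ii) the operator inequality $\sqrt{\Lambda} \ge \Lambda$, valid because $\sqrt{x} \ge x$ on $[0,1]$, which implies $\sigma \ge \tau$ and hence $\sigma^{2} \ge \tau^{2}$; and (iii) $\tau \ge \sigma^{2}$ by Cauchy--Schwarz (also automatically guaranteed by Hermiticity of the above $2\times 2$ matrix). From (ii) one obtains $\tau - \sigma^{2} \le \tau(1-\tau)$, and combining with the sharper estimate $(1-\tau)^{2} \le \varepsilon(1-\tau)$ gives $(1-\tau)^{2} + 4(\tau - \sigma^{2}) \le (1-\tau)(\varepsilon + 4\tau)$. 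Maximizing over $\tau \in [1-\varepsilon, 1]$ shows the right-hand side attains its maximum at the boundary $\tau = 1-\varepsilon$, with value $\varepsilon(4-3\varepsilon) \le 4\varepsilon$; taking square roots yields the claimed $2\sqrt{\varepsilon}$.

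I expect the main obstacle to be getting the constant right in this final step: the naive estimate $(1-\tau)^{2} \le \varepsilon^{2}$ combined with $\tau - \sigma^{2} \le \varepsilon$ would yield only $\sqrt{\varepsilon^{2}+4\varepsilon}$, which is strictly larger than $2\sqrt{\varepsilon}$. Recovering the sharp constant $2$ requires using $(1-\tau)^{2} \le \varepsilon(1-\tau)$ in tandem with the quadratic vanishing of $\tau - \sigma^{2}$ as $\tau \to 1$, and then recognizing that the combined expression is monotonically decreasing on $[1-\varepsilon,1]$. Purification and the $2\times 2$ diagonalization themselves are routine; it is this coordinated scalar bookkeeping that demands the most care.
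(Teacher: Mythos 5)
The paper does not actually prove this lemma; it is stated in Appendix B with citations to Winter and Ogawa--Nagaoka only, so there is no in-paper argument to compare against. Your proof is correct and self-contained, and it takes a different route from the one in the cited sources (and in standard textbooks), which writes $\rho-\sqrt{\Lambda}\rho\sqrt{\Lambda}=(I-\sqrt{\Lambda})\rho+\sqrt{\Lambda}\rho(I-\sqrt{\Lambda})$, applies the triangle inequality, and bounds each piece by $\sqrt{\epsilon}$ via the Cauchy--Schwarz inequality for the Hilbert--Schmidt inner product together with $(I-\sqrt{\Lambda})^{2}\le I-\Lambda$. Your route --- purify, reduce to two vectors by monotonicity of the trace norm under partial trace, and diagonalize the rank-two difference --- is equally valid and in fact yields the exact value $\sqrt{(1-\tau)^{2}+4(\tau-\sigma^{2})}$ before any estimation, a slightly sharper starting point; the scalar facts you invoke all check out ($\sigma$ is real and nonnegative, $\sigma\ge\tau$ follows from $\sqrt{\Lambda}\ge\Lambda$ for operators with spectrum in $[0,1]$, and $\tau\ge\sigma^{2}$ is Cauchy--Schwarz). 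One small blemish: your claim that $(1-\tau)(\epsilon+4\tau)$ is maximized over $\tau\in[1-\epsilon,1]$ at the left endpoint is only true for $\epsilon\le 4/7$, since the vertex of this downward parabola sits at $\tau=(4-\epsilon)/8$ and enters the interval for larger $\epsilon$; the conclusion $\le 4\epsilon$ still holds in that regime (and the lemma is vacuous for $\epsilon\ge 1$ anyway), but the stated justification does not. You can sidestep the maximization entirely via $(1-\tau)^{2}+4(\tau-\sigma^{2})\le(1-\tau)^{2}+4\tau(1-\tau)=(1-\tau)(1+3\tau)\le 4(1-\tau)\le 4\epsilon$.
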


The following lemma appears in Refs.~\cite{itit1999winter,ON07,W11}.

\begin{lemma}
[Gentle Operator Lemma for Ensembles]\label{lem:gentle-operator-ens} Given an
ensemble $\left\{  p_{X}\left(  x\right)  ,\rho_{x}\right\}  $ with expected
density operator $\rho\equiv\sum_{x}p_{X}\left(  x\right)  \rho_{x}$, suppose
that an operator $\Lambda$ such that $I\geq\Lambda\geq0$ succeeds with high
probability on the state $\rho$:%
\[
\text{Tr}\left\{  \Lambda\rho\right\}  \geq1-\epsilon.
\]
Then the subnormalized state $\sqrt{\Lambda}\rho_{x}\sqrt{\Lambda}$ is close
in expected trace distance to the original state $\rho_{x}$:%
\[
\mathbb{E}_{X}\left\{  \left\Vert \sqrt{\Lambda}\rho_{X}\sqrt{\Lambda}%
-\rho_{X}\right\Vert _{1}\right\}  \leq2\sqrt{\epsilon}.
\]

\end{lemma}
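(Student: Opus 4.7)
The plan is to reduce the ensemble statement to the single-state Gentle Operator Lemma (Lemma~\ref{lem:gentle-operator}) and then take an ensemble average via Jensen's inequality. The key conceptual observation is that although the hypothesis $\text{Tr}\{\Lambda\rho\}\geq 1-\epsilon$ is a bound on the \emph{average} state, the individual successes $\text{Tr}\{\Lambda\rho_x\}$ can vary across $x$, so the individual disturbance bounds must be averaged rather than applied uniformly.

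First I would define, for each letter $x$, the per-letter deficit $\epsilon_x \equiv 1 - \text{Tr}\{\Lambda\rho_x\}$. Linearity of the trace together with the definition of $\rho$ immediately gives
\begin{equation*}
\mathbb{E}_X\{\epsilon_X\} \;=\; 1 - \sum_x p_X(x)\,\text{Tr}\{\Lambda\rho_x\} \;=\; 1 - \text{Tr}\{\Lambda\rho\} \;\leq\; \epsilon,
\end{equation*}
which converts the single hypothesis into an averaged disturbance budget.

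Next I would apply Lemma~\ref{lem:gentle-operator} letter-by-letter to the state $\rho_x$ with success operator $\Lambda$ (extending trivially in the edge case $\epsilon_x > 1$, since any two subnormalized states lie within trace distance $2$ and hence within $2\sqrt{\epsilon_x}$ automatically). This produces the pointwise disturbance bound
\begin{equation*}
\|\sqrt{\Lambda}\,\rho_x\,\sqrt{\Lambda} - \rho_x\|_1 \;\leq\; 2\sqrt{\epsilon_x} \qquad \text{for every } x.
\end{equation*}
Averaging both sides against $p_X$ and invoking concavity of the square root (Jensen's inequality) combined with the bookkeeping bound from the first step yields
\begin{equation*}
\mathbb{E}_X\{\|\sqrt{\Lambda}\,\rho_X\,\sqrt{\Lambda} - \rho_X\|_1\} \;\leq\; 2\,\mathbb{E}_X\{\sqrt{\epsilon_X}\} \;\leq\; 2\sqrt{\mathbb{E}_X\{\epsilon_X\}} \;\leq\; 2\sqrt{\epsilon},
\end{equation*}
which is the claim.

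There is no serious obstacle here; the argument is essentially a one-line reduction plus an application of Jensen's inequality in the correct direction ($\mathbb{E}\{\sqrt{\cdot}\}\leq\sqrt{\mathbb{E}\{\cdot\}}$). The only points requiring minor care are (i) handling letters $x$ with $\epsilon_x>1$, which is done by the trivial trace-norm bound, and (ii) ensuring one does not inadvertently square-root inside the expectation before averaging. The $\sqrt{\epsilon}$ (rather than $\epsilon$) scaling is intrinsic to the single-state lemma and is preserved, not worsened, by the averaging step.
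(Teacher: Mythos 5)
Your proof is correct, and it is essentially the standard argument: the paper does not prove this lemma itself but cites it from the literature (Winter, Ogawa--Nagaoka, Wilde), where it is established exactly as you do, by applying the single-state Gentle Operator Lemma with the per-letter deficit $\epsilon_x = 1 - \mathrm{Tr}\{\Lambda\rho_x\}$ and then averaging via concavity of the square root. The only superfluous step is your edge case: since $\Lambda, \rho_x \geq 0$ imply $\mathrm{Tr}\{\Lambda\rho_x\} \geq 0$, one always has $\epsilon_x \leq 1$, so no special handling is needed.
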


\begin{lemma}
\label{lem:trace-inequality}Let $\rho$ and $\sigma$ be positive operators and
$\Lambda$ a positive operator such that $0\leq\Lambda\leq I$. Then the
following inequality holds%
\[
\text{Tr}\left\{  \Lambda\rho\right\}  \leq\text{Tr}\left\{  \Lambda
\sigma\right\}  +\left\Vert \rho-\sigma\right\Vert _{1}.
\]

\end{lemma}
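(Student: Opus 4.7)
The plan is to pass to the Jordan--Hahn decomposition of the Hermitian operator $\rho-\sigma$ and then use the hypothesis $0\leq\Lambda\leq I$ to bound the positive and negative parts separately. First I would write
\[
\text{Tr}\{\Lambda\rho\}-\text{Tr}\{\Lambda\sigma\}=\text{Tr}\{\Lambda(\rho-\sigma)\},
\]
and decompose $\rho-\sigma=P-N$ where $P,N\geq 0$ have orthogonal supports, so that $\lvert\rho-\sigma\rvert=P+N$ and $\|\rho-\sigma\|_1=\text{Tr}\{P\}+\text{Tr}\{N\}$.

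Next I would use the two operator inequalities $0\leq\Lambda$ and $\Lambda\leq I$ to control each term. Since $\Lambda\geq 0$ and $N\geq 0$, the trace $\text{Tr}\{\Lambda N\}$ is nonnegative (write it as $\text{Tr}\{N^{1/2}\Lambda N^{1/2}\}\geq 0$), so $-\text{Tr}\{\Lambda N\}\leq 0$. For the positive part, the inequality $I-\Lambda\geq 0$ together with $P\geq 0$ gives $\text{Tr}\{(I-\Lambda)P\}\geq 0$, hence $\text{Tr}\{\Lambda P\}\leq\text{Tr}\{P\}$. Combining these yields
\[
\text{Tr}\{\Lambda(\rho-\sigma)\}=\text{Tr}\{\Lambda P\}-\text{Tr}\{\Lambda N\}\leq\text{Tr}\{P\}\leq\text{Tr}\{P\}+\text{Tr}\{N\}=\|\rho-\sigma\|_1,
\]
which is the desired bound after rearrangement.

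There is no real obstacle here; the only subtlety is remembering that $\rho-\sigma$ need not be positive even though $\rho$ and $\sigma$ are, so one cannot simply write $\text{Tr}\{\Lambda(\rho-\sigma)\}\leq\text{Tr}\{\rho-\sigma\}$ without first isolating the positive part via the Jordan decomposition. An alternative route, which I would mention only if a shorter argument were preferred, is to apply the variational characterization $\|\rho-\sigma\|_1=2\max_{0\leq\Pi\leq I}\text{Tr}\{\Pi(\rho-\sigma)\}$ and observe that $\text{Tr}\{\Lambda(\rho-\sigma)\}\leq\tfrac{1}{2}\|\rho-\sigma\|_1\cdot 2=\|\rho-\sigma\|_1$; however, the Jordan-decomposition argument above is fully self-contained and is what I would present.
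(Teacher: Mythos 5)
Your proof is correct. The paper states this lemma without proof in its appendix of useful lemmas, and your Jordan--Hahn argument (splitting $\rho-\sigma=P-N$, discarding $-\mathrm{Tr}\{\Lambda N\}\leq 0$, and bounding $\mathrm{Tr}\{\Lambda P\}\leq\mathrm{Tr}\{P\}$ via $\Lambda\leq I$) is exactly the standard derivation the authors implicitly rely on, so there is nothing to add.
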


\begin{lemma}
[Non-commutative union bound \cite{S11}]\label{lem-non-com-union-bound}Let
$\sigma$ be a subnormalized state such that $\sigma\geq0$ and Tr$\left\{
\sigma\right\}  \leq1$. Let $\Pi_{1}$, \ldots, $\Pi_{N}$ be projectors. Then
the following \textquotedblleft non-commutative union bound\textquotedblright%
\ holds%
\[
\text{Tr}\left\{  \sigma\right\}  -\text{Tr}\left\{  \Pi_{N}\cdots\Pi
_{1}\sigma\Pi_{1}\cdots\Pi_{N}\right\}  \leq2\sqrt{\sum_{i=1}^{N}%
\text{Tr}\left\{  \left(  I-\Pi_{i}\right)  \sigma\right\}  }.
\]

\end{lemma}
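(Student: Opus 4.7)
My plan is to prove this non-commutative union bound by reducing to pure states, then establishing the sharper operator inequality $(I - P_N^\dagger)(I - P_N) \le \sum_{i=1}^N (I - \Pi_i)$ (where $P_j \equiv \Pi_j \cdots \Pi_1$, with $P_0 = I$), from which the claim follows via a standard square-root estimate.

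I would first reduce to a pure-state bound of the form $1 - \|P_N |\psi\rangle\|^2 \le 2\sqrt{\sum_i \|(I-\Pi_i)|\psi\rangle\|^2}$ for unit vectors $|\psi\rangle$. Starting from the spectral decomposition $\sigma = \sum_k p_k |\psi_k\rangle\langle\psi_k|$ (with $\sum_k p_k \le 1$), averaging the pure-state bound over $k$ with weights $p_k$, and invoking the Cauchy-Schwarz inequality $\sum_k p_k \sqrt{y_k} \le \sqrt{\sum_k p_k}\,\sqrt{\sum_k p_k y_k}$ (for $y_k \equiv \sum_i \|(I - \Pi_i)|\psi_k\rangle\|^2$) delivers the mixed-state statement. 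The pure-state bound, in turn, follows from the elementary factorization $a^2 - b^2 \le 2(a - b)$ valid for $0 \le b \le a \le 1$, combined with the reverse triangle inequality $1 - \|P_N|\psi\rangle\| \le \||\psi\rangle - P_N|\psi\rangle\|$: together these reduce the task to the vector-norm estimate $\||\psi\rangle - P_N|\psi\rangle\|^2 \le \sum_i \|(I - \Pi_i)|\psi\rangle\|^2$, which is a pointwise instance of the operator inequality stated above.

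The main obstacle, and where the non-commutative character truly enters, is the proof of the operator inequality $(I - P_N^\dagger)(I - P_N) \le \sum_{i=1}^N (I - \Pi_i)$. I would establish it by induction on $N$, with the base case $N = 1$ being equality since $(I - \Pi_1)$ is a projector. For the inductive step, writing $A \equiv P_{N-1}$ and $Q \equiv I - \Pi_N$, a direct expansion using $P_N = \Pi_N P_{N-1}$ gives
\[
(I - P_N^\dagger)(I - P_N) - (I - P_{N-1}^\dagger)(I - P_{N-1}) = Q A + A^\dagger Q - A^\dagger Q A,
\]
and the algebraic identity
\[
(I - \Pi_N) - \bigl[Q A + A^\dagger Q - A^\dagger Q A\bigr] = (I - A)^\dagger Q (I - A) \ge 0
\]
(positivity follows from $Q \ge 0$) closes the induction when added to the hypothesis for $N - 1$. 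The key insight is choosing $(I - P_N^\dagger)(I - P_N)$ as the quantity to induct on, rather than $I - P_N^\dagger P_N$ for which no such linear upper bound by $\sum (I-\Pi_i)$ exists in general, and recognizing the clean sum-of-squares form $(I - A)^\dagger Q (I - A)$ hidden in the residual. Once this operator bound is in hand, the square-root factor of $2$ in the final statement emerges automatically from the $a^2 - b^2 \le 2(a-b)$ step.
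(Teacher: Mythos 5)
The paper does not actually prove this lemma --- it is imported verbatim from Sen's paper \cite{S11} and stated without proof in Appendix~B --- so there is no in-paper argument to compare against; I can only assess your proof on its own terms, and it is correct and complete. The reduction to pure states via the spectral decomposition is sound, and the hypothesis $\mathrm{Tr}\{\sigma\}=\sum_k p_k\le 1$ is exactly what makes the Cauchy--Schwarz step $\sum_k p_k\sqrt{y_k}\le\sqrt{\sum_k p_k}\sqrt{\sum_k p_k y_k}\le\sqrt{\sum_k p_k y_k}$ go through. The chain $1-\|P_N|\psi\rangle\|^2\le 2(1-\|P_N|\psi\rangle\|)\le 2\|(I-P_N)|\psi\rangle\|$ is valid because $P_N$, being a product of projectors, is a contraction, so $\|P_N|\psi\rangle\|\le 1$. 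The inductive operator inequality also checks out: with $A=P_{N-1}$ and $Q=I-\Pi_N$ one has $(I-P_N^\dagger)(I-P_N)-(I-A^\dagger)(I-A)=QA+A^\dagger Q-A^\dagger QA=Q-(I-A)^\dagger Q(I-A)\le Q$, since $(I-A)^\dagger Q(I-A)\ge 0$, and adding the inductive hypothesis yields $(I-P_N^\dagger)(I-P_N)\le\sum_{i=1}^N(I-\Pi_i)$; evaluating on $|\psi\rangle$ gives the needed $\|(I-P_N)|\psi\rangle\|^2\le\sum_i\|(I-\Pi_i)|\psi\rangle\|^2$, and the final factor of $2$ under the square root emerges exactly as you say. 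For the record, this is essentially the same route Sen takes in \cite{S11}: he also reduces to pure states and proves the $\ell_2$ estimate $\|(I-\Pi_N\cdots\Pi_1)|\psi\rangle\|^2\le\sum_i\|(I-\Pi_i)|\psi\rangle\|^2$ by an equivalent induction; your packaging of that induction as an operator inequality is a clean, slightly more general way to state the same core step. No gaps.
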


\section{Proof of main theorem}
\label{sec:err-analysis}
\textbf{Error Analysis.} Suppose that Sender~1 transmits the $l^{\text{th}}$
codeword and that Sender~2 transmits the $m^{\text{th}}$ codeword. Then the
probability for the receiver to decode correctly with the above sequential
decoding strategy is as follows:%
\[
\text{Tr}\left\{  \phi_{l,m}\hat{\Pi}_{l-1,m}\cdots\hat{\Pi}_{1,1}\phi
_{l,m}\hat{\Pi}_{1,1}\cdots\hat{\Pi}_{l-1,m}\phi_{l,m}\right\}  ,
\]
where we make the abbreviations%
\begin{align*}
\phi_{l,m}  &  \equiv\left\vert \phi_{x^{n}\left(  l\right)  ,y^{n}\left(
m\right)  }\right\rangle \left\langle \phi_{x^{n}\left(  l\right)
,y^{n}\left(  m\right)  }\right\vert ,\\
\hat{\Pi}_{l,m}  &  \equiv I-\phi_{l,m}.
\end{align*}
The above probability corresponds to the case that the receiver receives
\textquotedblleft no\textquotedblright\ answers when he performs the
measurements for the 1st codeword pair $\left(  x^{n}\left(  1\right)
,y^{n}\left(  1\right)  \right)  $ all the way until the codeword pair
$\left(  x^{n}\left(  l-1\right)  ,y^{n}\left(  m\right)  \right)  $ and he
then receives a \textquotedblleft yes\textquotedblright\ answer for the
codeword pair $\left(  x^{n}\left(  l\right)  ,y^{n}\left(  m\right)  \right)
$. So, the probability that the receiver decodes the pair $\left(  l,m\right)
$ incorrectly is%
\[
1-\text{Tr}\left\{  \phi_{l,m}\hat{\Pi}_{l-1,m}\cdots\hat{\Pi}_{1,1}\phi
_{l,m}\hat{\Pi}_{1,1}\cdots\hat{\Pi}_{l-1,m}\phi_{l,m}\right\}  .
\]
In order to simplify the error analysis, we analyze the expectation of the
above error probability, by assuming that both senders choose their messages
uniformly at random and furthermore that the codewords are selected at random
independently and identically according to the distributions $p_{X}\left(
x\right)  $ and $p_{Y}\left(  y\right)  $ (as described above):%
\begin{equation}
1-\mathbb{E}\text{Tr}\left\{  \phi_{L,M}\hat{\Pi}_{L-1,M}\cdots\hat{\Pi}%
_{1,1}\phi_{L,M}\hat{\Pi}_{1,1}\cdots\hat{\Pi}_{L-1,M}\phi_{L,M}\right\}  .
\label{eq:seq-dec-err-prob}%
\end{equation}
In the above and for the rest of the proof, it is implicit that the
expectation $\mathbb{E}$ is with respect to the random variables $X^{n}$,
$Y^{n}$, $L$, and $M$, unless otherwise stated.

We first observe that it is possible to consider a slightly altered channel
for which we are coding. Instead of decoding the original channel
$\left\vert \phi_{x^{n}\left(  l\right)  ,y^{n}\left(
m\right)  }\right\rangle$, we can
decode a projected channel of the form $\Pi\Pi_{l} \left\vert \phi_{x^{n}\left(  l\right)  ,y^{n}\left(
m\right)  }\right\rangle$, where we define the projectors $\Pi$ and $\Pi_l$ below. That we can do so follows from the inequalities below:
\begin{align*}
1  &  =\mathbb{E}\text{Tr}\left\{  \phi_{L,M}\right\} \\
&  =\mathbb{E}\text{Tr}\left\{  \Pi_{L}\phi_{L,M}\right\}  +\mathbb{E}%
\text{Tr}\{\hat{\Pi}_{L}\phi_{L,M}\}\\
&  =\mathbb{E}\text{Tr}\left\{  \Pi_{L}\phi_{L,M}\Pi_{L}\right\}
+\mathbb{E}_{X^{n},L,M}\text{Tr}\{\hat{\Pi}_{L}\mathbb{E}_{Y^{n}}\left\{
\phi_{L,M}\right\}  \}\\
&  =\mathbb{E}\text{Tr}\left\{  \Pi_{L}\phi_{L,M}\Pi_{L}\right\}
+\mathbb{E}_{X^{n},L,M}\text{Tr}\{\hat{\Pi}_{L}\rho_{L}\}\\
&  \leq\mathbb{E}\text{Tr}\left\{  \Pi\Pi_{L}\phi_{L,M}\Pi_{L}\right\}
+\mathbb{E}\text{Tr}\left\{  \hat{\Pi}\Pi_{L}\phi_{L,M}\Pi_{L}\right\}
+\mathbb{\epsilon}
\end{align*}
\begin{align*}
&  \leq\mathbb{E}\text{Tr}\left\{  \Pi\Pi_{L}\phi_{L,M}\Pi_{L}\Pi\right\}
+\mathbb{E}\text{Tr}\left\{  \hat{\Pi}\phi_{L,M}\right\} \\
&  \ \ \ \ +\mathbb{E}\left\Vert \phi_{L,M}-\Pi_{L}\phi_{L,M}\Pi
_{L}\right\Vert _{1}+\mathbb{\epsilon}\\
&  \leq\mathbb{E}\text{Tr}\left\{  \Pi\Pi_{L}\phi_{L,M}\Pi_{L}\Pi\right\}
+\text{Tr}\left\{  \hat{\Pi}\mathbb{E}\left\{  \phi_{L,M}\right\}  \right\}
+2\sqrt{\mathbb{\epsilon}}+\mathbb{\epsilon}\\
&  =\mathbb{E}\text{Tr}\left\{  \Pi\Pi_{L}\phi_{L,M}\Pi_{L}\Pi\right\}
+\text{Tr}\left\{  \hat{\Pi}\rho^{\otimes n}\right\}  +2\sqrt{\mathbb{\epsilon
}}+\mathbb{\epsilon}\\
&  \leq\mathbb{E}\text{Tr}\left\{  \Pi\Pi_{L}\phi_{L,M}\Pi_{L}\Pi\right\}
+2\sqrt{\mathbb{\epsilon}}+2\mathbb{\epsilon}%
\end{align*}
In the second equality, the projector $\Pi_{L}$ is a weak conditionally
typical projector corresponding to the following state:%
\[
\mathbb{E}_{Y^{n}}\left\{  \left\vert \phi_{X^{n}\left(  L\right)
,Y^{n}\left(  M\right)  }\right\rangle \left\langle \phi_{X^{n}\left(
L\right)  ,Y^{n}\left(  M\right)  }\right\vert \right\}  .
\]
(See Appendix~\ref{sec:typ-review} for an explanation.) The first inequality follows from
applying the property (\ref{eq:prop-1-weak-typ}) of weak conditionally typical subspaces. Also, we
bring in the weak typical projector (defined as $\Pi$)\ for the following
state:%
\[
\mathbb{E}_{X^{n}Y^{n}}\left\{  \left\vert \phi_{X^{n}\left(  L\right)
,Y^{n}\left(  M\right)  }\right\rangle \left\langle \phi_{X^{n}\left(
L\right)  ,Y^{n}\left(  M\right)  }\right\vert \right\}  .
\]
The second inequality follows by applying the trace inequality from Lemma~\ref{lem:trace-inequality}.
The third inequality follows from the Gentle Operator Lemma for Ensembles and
the property (\ref{eq:prop-1-weak-typ}) of weak conditionally typical subspaces. The final inequality
follows from the property (\ref{eq:prop-1-typ}) of weakly typical subspaces.

Consider also the following lower bound:%
\begin{align*}
&  \mathbb{E}\text{Tr}\left\{  \phi_{L,M}\hat{\Pi}_{L-1,M}\cdots\hat{\Pi
}_{1,1}\phi_{L,M}\hat{\Pi}_{1,1}\cdots\hat{\Pi}_{L-1,M}\phi_{L,M}\right\} \\
&  =\mathbb{E}\text{Tr}\left\{  \hat{\Pi}_{1,1}\cdots\hat{\Pi}_{L-1,M}%
\phi_{L,M}\hat{\Pi}_{L-1,M}\cdots\hat{\Pi}_{1,1}\phi_{L,M}\right\} \\
&  \geq\mathbb{E}\text{Tr}\{\hat{\Pi}_{1,1}\cdots\hat{\Pi}_{L-1,M}\phi
_{L,M}\hat{\Pi}_{L-1,M}\cdots\hat{\Pi}_{1,1}\Pi\Pi_{L}\phi_{L,M}\Pi_{L}\Pi\}\\
&  \ \ \ \ -\mathbb{E}\left\Vert \Pi_{L}\phi_{L,M}\Pi_{L}-\phi_{L,M}%
\right\Vert _{1}\\
&  \ \ \ \ -\mathbb{E}\left\Vert \Pi\phi_{L,M}\Pi-\phi_{L,M}\right\Vert _{1}\\
&  \geq\mathbb{E}\text{Tr}\left\{  \hat{\Pi}_{1,1}\cdots\hat{\Pi}_{L-1,M}%
\phi_{L,M}\hat{\Pi}_{L-1,M}\cdots\hat{\Pi}_{1,1}\Pi\Pi_{L}\phi_{L,M}\Pi_{L}%
\Pi\right\} \\
&  \ \ \ \ -4\sqrt{\epsilon}.
\end{align*}
The first equality is from cyclicity of trace. The first inequality follows
from two applications of the trace inequality (Lemma~\ref{lem:trace-inequality}). The final inequality follows
from the properties of typical subspaces. Putting all of this together gives
us the following upper bound on the error probability in
(\ref{eq:seq-dec-err-prob}):%
\begin{multline*}
\mathbb{E}\text{Tr}\left\{  \Pi\Pi_{L}\phi_{L,M}\Pi_{L}\Pi\right\} \\
-\mathbb{E}\text{Tr}\{\phi_{L,M}\hat{\Pi}_{L-1,M}\cdots\hat{\Pi}_{1,1}\Pi
\Pi_{L}\phi_{L,M}\Pi_{L}\Pi\hat{\Pi}_{1,1}\cdots\hat{\Pi}_{L-1,M}\}\\
+2\mathbb{\epsilon}+6\sqrt{\epsilon}.
\end{multline*}
We now apply Sen's non-commutative union bound (Lemma~\ of Ref.~\cite{S11}\ or
Lemma~\ref{lem-non-com-union-bound} of Appendix~\ref{sec:useful-lemmas}) and concavity of the
square-root function to obtain the following upper bound on the error
probability:%
\begin{multline*}
2\left(
\begin{array}
[c]{l}%
\mathbb{E}\text{Tr}\left\{  \left(  I-\phi_{L,M}\right)  \Pi\Pi_{L}\phi
_{L,M}\Pi_{L}\Pi\right\} \\
\ \ \ \ \ +\mathbb{E}\sum_{\left(  i,j\right)  <\left(  L,M\right)  }%
\text{Tr}\left\{  \phi_{i,j}\Pi\Pi_{L}\phi_{L,M}\Pi_{L}\Pi\right\}
\end{array}
\right)  ^{\frac{1}{2}}\\
\leq2\left(
\begin{array}
[c]{l}%
\mathbb{E}\text{Tr}\left\{  \left(  I-\phi_{L,M}\right)  \Pi\Pi_{L}\phi
_{L,M}\Pi_{L}\Pi\right\} \\
\ \ \ \ \ +\mathbb{E}\sum_{\left(  i,j\right)  \neq\left(  L,M\right)
}\text{Tr}\left\{  \phi_{i,j}\Pi\Pi_{L}\phi_{L,M}\Pi_{L}\Pi\right\}
\end{array}
\right)  ^{\frac{1}{2}}.
\end{multline*}
We handle each of these error terms individually. We upper bound the first
term:%
\begin{align*}
&  \mathbb{E}\text{Tr}\left\{  \left(  I-\phi_{L,M}\right)  \Pi\Pi_{L}%
\phi_{L,M}\Pi_{L}\Pi\right\} \\
&  \leq\mathbb{E}\text{Tr}\left\{  \left(  I-\phi_{L,M}\right)  \phi
_{L,M}\right\} \\
&  \ \ \ \ +\mathbb{E}\left\Vert \Pi_{L}\phi_{L,M}\Pi_{L}-\phi_{L,M}%
\right\Vert _{1}+\mathbb{E}\left\Vert \Pi\phi_{L,M}\Pi-\phi_{L,M}\right\Vert
_{1}\\
&  \leq4\sqrt{\epsilon}.
\end{align*}
The inequalities follow from the trace inequality, the properties of typical
subspaces, and the Gentle Operator Lemma for Ensembles. We can split the
second term into three different ones as follows:%
\[
\mathbb{E}\sum_{\left(  i,j\right)  \neq\left(  L,M\right)  }\left(
\cdot\right)  =\mathbb{E}\sum_{i\neq L}\left(  \cdot\right)  +\mathbb{E}%
\sum_{j\neq M}\left(  \cdot\right)  +\mathbb{E}\sum_{i\neq L,j\neq M}\left(
\cdot\right)  .
\]
We handle each of these three terms separately. Consider the first term:%
\begin{align}
&  \mathbb{E}\sum_{i\neq L}\text{Tr}\left\{  \phi_{i,M}\Pi\Pi_{L}\phi_{L,M}%
\Pi_{L}\Pi\right\} \nonumber\\
&  =\mathbb{E}_{Y^{n},L,M}\sum_{i\neq L}\mathbb{E}_{X^{n}}\text{Tr}\left\{
\phi_{i,M}\Pi\Pi_{L}\phi_{L,M}\Pi_{L}\Pi\right\} \nonumber\\
&  =\mathbb{E}_{Y^{n},L,M}\sum_{i\neq L}\text{Tr}\left\{  \mathbb{E}_{X^{n}%
}\left\{  \phi_{i,M}\right\}  \Pi\Pi_{L}\mathbb{E}_{X^{n}}\left\{  \phi
_{L,M}\right\}  \Pi_{L}\Pi\right\} \nonumber\\
&  =\mathbb{E}_{Y^{n},L,M}\sum_{i\neq L}\text{Tr}\left\{  \rho_{M}\Pi\Pi
_{L}\rho_{M}\Pi_{L}\Pi\right\} \nonumber\\
&  \leq2^{-nH_{\min}\left(  B|Y\right)  }\mathbb{E}_{Y^{n},L,M}\sum_{i\neq
L}\text{Tr}\left\{  \rho_{M}\Pi\Pi_{L}\Pi\right\} \nonumber\\
&  \leq2^{-nH_{\min}\left(  B|Y\right)  }\left\vert \mathcal{L}\right\vert
\label{eq:min-ent-bound}%
\end{align}
The first equality follows by bringing the expectation $\mathbb{E}_{X^{n}}$
inside the sum. The second equality follows because the random variables
$X^{n}\left(  i\right)  $ and $X^{n}\left(  L\right)  $ are independent and so
the expectation $\mathbb{E}_{X^{n}}$ distributes. The third equality follows
by evaluating the expectations by defining $\rho_{M}$ to be as follows:%
\[
\rho_{M}\equiv\mathbb{E}_{X^{n}}\left\{  \left\vert \phi_{X^{n}\left(
L\right)  ,Y^{n}\left(  M\right)  }\right\rangle \left\langle \phi
_{X^{n}\left(  L\right)  ,Y^{n}\left(  M\right)  }\right\vert \right\}  .
\]
The first inequality follows by bounding the largest eigenvalue of $\rho_{M}$
by the min-entropy $2^{-nH_{\min}\left(  B|Y\right)  }$. The second inequality
follows because Tr$\left\{  \rho_{M}\Pi\Pi_{L}\Pi\right\}  \leq1$.

We handle the second term:%
\begin{align*}
&  \mathbb{E}\sum_{j\neq M}\text{Tr}\left\{  \phi_{L,j}\Pi\Pi_{L}\phi_{L,M}%
\Pi_{L}\Pi\right\} \\
&  =\mathbb{E}_{X^{n},L,M}\sum_{j\neq M}\mathbb{E}_{Y^{n}}\text{Tr}\left\{
\phi_{L,j}\Pi\Pi_{L}\phi_{L,M}\Pi_{L}\Pi\right\} \\
&  =\mathbb{E}_{X^{n},L,M}\sum_{j\neq M}\text{Tr}\left\{  \mathbb{E}_{Y^{n}%
}\left\{  \phi_{L,j}\right\}  \Pi\Pi_{L}\mathbb{E}_{Y^{n}}\left\{  \phi
_{L,M}\right\}  \Pi_{L}\Pi\right\} \\
&  =\mathbb{E}_{X^{n},L,M}\sum_{j\neq M}\text{Tr}\left\{  \rho_{L}\Pi\Pi
_{L}\rho_{L}\Pi_{L}\Pi\right\} \\
&  \leq2^{-n\left[  H\left(  B|X\right)  -\delta\right]  }\mathbb{E}%
_{X^{n},L,M}\sum_{j\neq M}\text{Tr}\left\{  \rho_{L}\Pi\Pi_{L}\Pi\right\} \\
&  \leq2^{-n\left[  H\left(  B|X\right)  -\delta\right]  }\left\vert
\mathcal{M}\right\vert .
\end{align*}
The first four equalities follow for reasons similar to the above. The first
inequality is from the typical projector bound in (\ref{eq:property-3-typical}). The last inequality
follows because Tr$\left\{  \rho_{L}\Pi\Pi_{L}\Pi\right\}  \leq1$. Finally, we
handle the third term:%
\begin{align*}
&  \mathbb{E}\sum_{i\neq L,j\neq M}\text{Tr}\left\{  \phi_{i,j}\Pi\Pi_{L}%
\phi_{L,M}\Pi_{L}\Pi\right\} \\
&  =\mathbb{E}_{X^{n},L,M}\sum_{i\neq L,j\neq M}\text{Tr}\left\{  \rho_{i}%
\Pi\Pi_{L}\rho_{L}\Pi_{L}\Pi\right\} \\
&  \leq\mathbb{E}_{X^{n},L,M}\sum_{i\neq L,j\neq M}\text{Tr}\left\{  \rho
_{i}\Pi\rho_{L}\Pi\right\}
\end{align*}%
\begin{align*}
&  =\mathbb{E}_{L,M}\sum_{i\neq L,j\neq M}\text{Tr}\left\{  \mathbb{E}_{X^{n}%
}\left\{  \rho_{i}\right\}  \Pi\mathbb{E}_{X^{n}}\left\{  \rho_{L}\right\}
\Pi\right\} \\
&  =\mathbb{E}_{L,M}\sum_{i\neq L,j\neq M}\text{Tr}\left\{  \rho^{\otimes
n}\Pi\rho^{\otimes n}\Pi\right\} \\
&  \leq2^{-n\left[  H\left(  B\right)  -\delta\right]  }\mathbb{E}_{L,M}%
\sum_{i\neq L,j\neq M}\text{Tr}\left\{  \rho^{\otimes n}\Pi\right\} \\
&  \leq2^{-n\left[  H\left(  B\right)  -\delta\right]  }\left\vert
\mathcal{L}\right\vert \left\vert \mathcal{M}\right\vert .
\end{align*}
The first inequality follows for reasons similar to the above ones. The first
inequality follows because $\Pi_{L}\rho_{L}\Pi_{L}\leq\rho_{L}$. The second
equality follows from distributing the expectation over $X^{n}$. The second
inequality follows from the typical projector bound in (\ref{eq:typ-prop-three}). The final
inequality follows because Tr$\left\{  \rho^{\otimes n}\Pi\right\}  \leq1$.

Thus, the overall upper bound on the error probability with this sequential
decoding strategy is%
\begin{multline}
6 \epsilon+2\sqrt{\epsilon} + 
2 \big( 4\sqrt{\epsilon} + 2^{-nH_{\min}\left(  B|Y\right)  }\left\vert \mathcal{L}\right\vert \\
+ 2^{-n\left[  H\left(  B|X\right)  -\delta\right]  }\left\vert
\mathcal{M}\right\vert + 2^{-n\left[  H\left(  B\right)  -\delta\right]  }\left\vert
\mathcal{L}\right\vert \left\vert \mathcal{M}\right\vert\big)^{1/2} ,
\end{multline}
which we can make arbitrarily small by choosing the rates to be in the region given
in the statement of Theorem~\ref{thm:main-theorem} and taking $n$
sufficiently large. We proved a bound on the
expectation of the average probability, which implies there exists a
particular code that has arbitrarily small average error probability under the
same choice of $\left\vert \mathcal{L}\right\vert $, $\left\vert \mathcal{M}\right\vert $, and $n$. 

\begin{remark}
\label{rem:connection-to-Sen}
One can achieve the following rate region with von Neumann entropies by
employing an idea similar to that of Sen in Ref.~\cite{S11}:%
\begin{equation}
R_{1}\leq H\left(  B|Y\right)  ,\ \ \ R_{2}\leq H\left(  B|X\right)
,\ \ \ R_{1}+R_{2}\leq H\left(  B\right)  .
\label{eq:full-acheive-rate-region}%
\end{equation}
Indeed, the idea is to perform sequential decoding measurements of the
following form:%
\begin{equation}
\{\phi_{x^{n}\left(  l\right)  ,y^{n}\left(  m\right)  }^{\prime},I^{\otimes
n}-\phi_{x^{n}\left(  l\right)  ,y^{n}\left(  m\right)  }^{\prime}\},
\label{eq:mod-seq-meas}%
\end{equation}
where%
\[
|\phi_{x^{n}\left(  l\right)  ,y^{n}\left(  m\right)  }^{\prime}\rangle
\equiv\frac{1}{\left\Vert \Pi_{m}\left\vert \phi_{x^{n}\left(  l\right)
,y^{n}\left(  m\right)  }\right\rangle \right\Vert _{2}}\Pi_{m}\left\vert
\phi_{x^{n}\left(  l\right)  ,y^{n}\left(  m\right)  }\right\rangle ,
\]
and $\Pi_{m}$ is a typical projector for the state $\mathbb{E}_{X^{n}}\left\{
\phi_{X^{n}\left(  l\right)  ,y^{n}\left(  m\right)  }\right\}  $. The
following bound holds for the norm $\left\Vert \Pi_{m}\left\vert \phi
_{x^{n}\left(  l\right)  ,y^{n}\left(  m\right)  }\right\rangle \right\Vert
_{2}^{2}\geq1-\sqrt{\epsilon}$, due to the properties of quantum typicality
(one requires strong typicality here, but this is a minor point). After
applying Sen's bound, one can invoke the following operator inequality:%
\begin{align*}
&  \phi_{x^{n}\left(  l\right)  ,y^{n}\left(  m\right)  }^{\prime}\\
&  =\left(  \left\Vert \Pi_{m}\left\vert \phi_{x^{n}\left(  l\right)
,y^{n}\left(  m\right)  }\right\rangle \right\Vert _{2}^{2}\right)  ^{-1}%
\Pi_{m}\ \phi_{x^{n}\left(  l\right)  ,y^{n}\left(  m\right)  }\ \Pi_{m}\\
&  \leq\left(  1-\sqrt{\epsilon}\right)  ^{-1}\Pi_{m}\ \phi_{x^{n}\left(
l\right)  ,y^{n}\left(  m\right)  }\ \Pi_{m},
\end{align*}
and then employ typical subspace bounds in order to obtain a von Neumann
entropy bound rather than a min-entropy bound as in (\ref{eq:min-ent-bound}).
The reason we do not employ the above approach is that it is not clear to us
how to implement the measurements in (\ref{eq:mod-seq-meas}) with optical
devices when we get to the case of the pure-interference bosonic
MAC.
\end{remark}

\end{document}